\definecolor{ToCgreen}{RGB}{0, 128, 0}
\pgfplotsset{compat=1.18}
\newtheorem{theorem}{Theorem}
\newtheorem{lemma}[theorem]{Lemma}
\newtheorem{definition}[theorem]{Definition}
\newcommand{\Ber}{\mathrm{Ber}}
\newcommand{\cM}{\mathcal{M}}
\newcommand{\cA}{\mathcal{A}}
\renewcommand{\epsilon}{\varepsilon}
\newcommand{\eps}{\epsilon}
\newcommand{\N}{\mathbb{N}}
\newcommand{\Z}{\mathbb{Z}}
\newcommand{\R}{\mathbb{R}}
\newcommand{\E}{\mathbb{E}}
\newcommand{\cD}{\mathcal{D}}
\newcommand{\cP}{\mathcal{P}}
\renewcommand{\phi}{\varphi}
\newcommand{\cO}{\mathcal{O}}
\newcommand{\cQ}{\mathcal{Q}}
\newcommand{\cE}{\mathcal{E}}
\newcommand{\teps}{\tilde{\eps}}
\newcommand{\Geo}[1]{\mathrm{Geom}_{#1}}
\newcommand{\Exp}[1]{\mathrm{Exp}_{#1}}
\DeclareMathOperator{\supp}{supp}
\renewcommand{\setminus}{\smallsetminus}
\newcommand{\pasin}[1]{\todo[color=blue!40]{{\it Pasin:~}#1}}
\newcommand{\sasha}[1]{\todo[color=cyan]{{\it Sasha:~}#1}}
\newcommand{\pritish}[1]{\todo[color=blue!20]{{\it Pritish:~}#1}}
\newcommand{\mechanism}{\mathcal{M}}
\newcommand{\dataset}{D}
\newcommand{\datasetSet}{\mathcal{D}}
\newcommand{\resultSet}{\mathcal{O}}
\newcommand{\adversary}{\cA}
\newcommand{\IT}[4]{\mathrm{IT}^{#1}(\mathcal{F}_{{#3}, {#2}};{#4})}
\newcommand{\renyidiv}[3]{\mathrm{D}_{#1}\left({#2}~\|~{#3}\right)}
\newcommand{\algorithmicparameters}{\textbf{Parameters:}}
\newcommand{\PARAMETERS}{\item[\algorithmicparameters]}
\newcommand{\algorithmicpparameters}{\phantom{\textbf{Parameters:}}}
\newcommand{\PPARAMETERS}{\item[\algorithmicpparameters]}
\newcommand{\ans}[2]{#1 {\tiny $\pm$ #2}}
\title{Private Hyperparameter Tuning with Ex-Post Guarantee}
\author{
Badih Ghazi, Pritish Kamath, Alexander Knop \\
Ravi Kumar, Pasin Manurangsi, Chiyuan Zhang
}
\affil{Google Research\thanks{Authors are in alphabetical order of last name.}}
\begin{document}
    
    \maketitle

    \begin{abstract}
        The conventional approach in differential privacy (DP) literature formulates the privacy-utility tradeoff with a ``privacy-first'' perspective: 
        for a predetermined level of privacy, a certain utility is achievable. 
        However, practitioners often operate under a ``utility-first'' paradigm, prioritizing a desired level of utility and then determining 
        the corresponding privacy cost.
        
        \citet{WRLWN19} initiated a formal study of this ``utility-first'' perspective by introducing ex-post DP. 
        They demonstrated that by adding correlated Laplace noise and progressively reducing it on demand, a sequence of increasingly
        accurate estimates of a private parameter can be generated, with the privacy cost attributed only to the least noisy iterate released. 
        This led to a Laplace mechanism variant that achieves a specified utility with minimal privacy loss.
        However, their work, and similar findings by \citet{WRWR22}, are primarily limited to simple mechanisms based
        on Laplace or Gaussian noise. 
        
        In this paper, we significantly generalize these results.  In particular, we extend the work of \citet{WRLWN19} and \citet{LT19}
        to support any sequence of private estimators, incurring at most a doubling of the original privacy budget. 
        Furthermore, we demonstrate that hyperparameter tuning for these estimators, including the selection of an optimal privacy budget, 
        can be performed without additional privacy cost. 
        Finally, we extend our results to  ex-post R\'{e}nyi DP, further broadening the applicability of utility-first privacy mechanisms.
    \end{abstract}

    \section{Introduction}

    Many applications of machine learning and statistics involve computation on sensitive data, necessitating privacy-preserving techniques. 
    In recent years, differential privacy (DP)~\citep{DMNS16} has become one of the most rigorous formalization of privacy, with many practical 
    applications~\citep{AbadiCGMMT016,yu2021differentially,mehta2023towards,tang2024private,census2020, israelNationalRegistry,WZLDSG20}.
    Recall that an algorithm is DP if the output distributions on two neighboring inputs are close, where the closeness is determined by the 
    \emph{privacy budget}\footnote{%
        Our work also applies to \emph{approximate-DP} with $\delta$ parameter; see \Cref{sec:prelim} for the definition.
    } $\eps$:
    \begin{definition}[Pure Differentially Privacy,~\citep{DMNS16}] \label{def:dp-intro-exante}
        For $\eps \geq 0$, a mechanism $\mechanism$ with input from $\datasetSet$ and output from $\resultSet$ is  
        \emph{$\eps$-differentially private} (or simply, \emph{$\eps$-DP)} iff $\Pr[\mechanism(D) = o] \le e^\eps \Pr[\mechanism(D') = o],$
        for all $o \in \resultSet$ and neighboring datasets $\dataset, \dataset' \in \datasetSet$.
        \footnote{We also assume for simplicity that $\resultSet$ is finite; it is simple to extend the results to the infinite case.}
    \end{definition}

    For reasons that will become clear soon, we refer to the classic DP definition above as \emph{ex-ante} DP.
    
    \paragraph{Utility-First DP Mechanisms.} One of the main challenges in deploying DP is to ensure that the output remains useful. 
    In particular, real-world deployments are often constrained by utility requirements. 
    For example, in ML training, one may wish to ensure that the model accuracy meets a certain threshold. 
    Similarly, in statistical applications, one may wish to guarantee that the \emph{relative} error of the estimated population is small
    (e.g., \citep{GhaziKKMPSWW22}). 
    Such desiderata may not be compatible with the ex-ante DP (\Cref{def:dp-intro-exante}) since it \emph{a priori} specifies a fixed privacy budget $\eps$.
    This motivated \cite{WRLWN19} to propose the notion of \emph{ex-post} DP, where the privacy budget $\eps$ can \emph{depend on the output} of the mechanism,
    as formalized below.
    \begin{definition}[Ex-post (Pure-) DP, \citep{WRLWN19}]
    \label{def:dp-intro-expost}
        For a function $\teps : \resultSet \to \R^{\geq 0}$,  
        a mechanism $\mechanism$ with input from $\datasetSet$ and output from $\resultSet$ is
        \emph{ex-post $\teps$-DP} iff $\Pr[\mechanism(D) = o] \le e^{\teps(o)} \Pr[\mechanism(D') = o],$
        for all $o \in \resultSet$ and neighboring datasets $\dataset, \dataset' \in \datasetSet$.
    \end{definition}

    Observe that any ex-post $\teps$-DP mechanism is ex-ante $\eps$-DP where $\eps = \max_{o \in \cO} \teps(o)$. 
    Furthermore, ex-post DP can also be used as a \emph{privacy filter} to guarantee ex-ante DP~\citep{LebensoldPB24}. 
    Roughly speaking, given a total privacy budget $\eps$ for ex-ante DP, we run multiple ex-post algorithms where we subtract the realized privacy 
    budget $\teps(o)$ from $\eps$ until the latter is exhausted.

    Thus, the main question in ex-post DP becomes: What is the smallest privacy budget needed to produce an output that passes the desired utility bar? 
    As pointed out in \citep{WRLWN19}, a simple algorithm here is the ``doubling'' method where we start from a small privacy budget, run the ``base'' 
    (ex-ante DP) algorithm with this budget, and continue until we find an acceptable output\footnote{Checking whether an output passes a utility bar must also be done with DP.}. 
    While simple, this doubling method can result in the privacy budget as large as four times\footnote{A factor of two due to having to apply the composition theorem 
    to sums up all the budget, and another factor of two from the potential misalignment between the doubling exponential grid and the optimal budget.} the optimal budget. 
    Although there has been no improvement to this for general mechanisms, \cite{WRLWN19} and later \cite{WRWR22}, building on an earlier work by \cite{KoufogiannisHP16}, 
    gave an elegant improvement for the simple Laplace and Gaussian mechanisms that allows for a finer control of privacy budget increment than doubling and also just pays
    for the final privacy budget, instead of the total privacy budget (via composition). 
    Alas, their method does not apply to more complex mechanisms, such as the seminal DP-SGD algorithm~\citep{AbadiCGMMT016} that is ubiquitous in private ML applications.

    \paragraph{Hyperparameter Tuning with DP.} A related challenge in private ML deployments is hyperparameter tuning. 
    A naive solution here is to run any standard hyperparameter tuning algorithm and compute the total budget via  composition theorems.
    However, this results in a prohibitive blow-up in the privacy budget, depending on the number of times the base algorithm is invoked. 
    \cite{LT19} devised a simple algorithm but with a surprising guarantee. Their algorithm performs hyperparameter tuning on any ex-ante $\eps$-DP by 
    running it possibly multiple times (based on a carefully chosen distribution) and outputting the best found parameter.
    Even though the algorithm may be run many times, they show that the privacy budget incurred is only $3\eps$.
    Furthermore, they show that, any ``weakly useful'' ex-ante DP hyperparameter tuning algorithm must incur privacy budget at least (roughly) $2\eps$. 
    A follow-up work by \cite{PT22} closed this gap by giving an algorithm with privacy budget arbitrarily close to $2\eps$, and further generalized this 
    to work with R\'enyi DP~\citep{Mir17}. 
    Although the task of optimizing the privacy budget in ex-post DP framework seems similar to hyperparameter tuning where $\eps$ is a parameter, none of 
    the aforementioned works \citep{LT19,PT22} applies to this setting since they require the base mechanism to have a fixed value of $\eps$ in the ex-ante DP framework.

    \subsection{Our Contributions}

    In this work, we present the first hyperparameter tuning algorithm with ex-post DP guarantees. 
    Our algorithm can take in multiple base mechanisms $\mechanism_1, \dots, \mechanism_d$ where $\mechanism_i$ is ex-ante $\eps_i$-DP. 
    It then runs these mechanisms (possibly multiple times, based on carefully crafted distributions) and select the ``best'' output.
    The ex-post DP guarantee is that, if the output comes from the base mechanism $\mechanism_i$, then the privacy budget spent is only (roughly) $2\eps_i$.
    We consider this \emph{counterintuitive} and highly \emph{surprising} given that other base mechanisms $\mechanism_j$ with higher budget (i.e., $\eps_j > \eps_i$) 
    might be run en route and their output considered as part of the selection, nevertheless, our algorithm does not have to pay for this higher privacy budget $\eps_j$! 
    We are unaware of a similar phenomenon in DP.  

    While our algorithm (which works for multiple mechanisms with different $\eps_i$'s) is a significant generalization of those of \cite{LT19,PT22} 
    (which only work for a single mechanism in the ex-ante setting), our privacy analysis is arguably simpler than theirs. 
    In particular, the proof of our main privacy theorem (\Cref{theorem:max-score}) draws inspiration from that of the Sparse Vector Technique~\citep{DworkNRRV09} and 
    is elementary. 
    We hope that the resulting simplicity will help further elucidate the underlining principles behind DP hyperparameter tuning.
    We note that our hyperparameter tuning is well suited for the task of optimizing the privacy budget given the privacy bar in ex-ante DP, since we can set the
    ``score'' in the selection step to be based on the privacy budget and whether the privacy bar is passed. 

    Finally, we introduce a notion of ex-post R\'enyi DP and showed that hyperparameter tuning with R\'enyi DP is also possible (\Cref{theorem:max-score-rdp});
    in addition, we prove a connection between ex-post R\'enyi DP and ex-post approximate-DP and constructed a privacy filter that allows composing together
    a sequence of ex-post R\'enyi DP mechanism into a ex-ante R\'enyi DP guarantee (which could allow using this algorithm in practical systems that want to 
    provide ex-ante guarantees).
    Our technique, which applies to any mechanism including the aforementioned DP-SGD, is far more general than those in \citep{WRLWN19,WRWR22}, 
    which only applies to Laplace or Gaussian mechanisms. 
    To demonstrate this, we provide experiments that empirically show that our algorithm outperforms those in \citep{WRLWN19,WRWR22} for linear regression (using the conversion from
    ex-post R\'enyi DP to ex-post approximate-DP).
        
    \section{Preliminaries}
    \label{sec:prelim}
    
    Let $\datasetSet$ be a set of datasets. We write $D \sim D'$ as a shorthand for a pair of neighboring input datasets (in $\datasetSet$). 
    Let $\resultSet$ be any set; for  simplicity, we assume that $\resultSet$ is discrete.  
    We say that a function $\mechanism$ mapping $\dataset \in \datasetSet$ to a distribution over $\resultSet$ is a \emph{mechanism} with
    input from $\datasetSet$ and output from $\resultSet$.
        
    \paragraph{Ex-Ante DP.}
    While we have defined (ex-ante) \emph{pure}-DP in \Cref{def:dp-intro-exante}, it will be useful to recall other variants of DP. 
    We start with approximate-DP, which allows an additional additive error $\delta$ in the difference in the two probabilities, as defined below.
    When $\delta = 0$, this coincides with \Cref{def:dp-intro-exante}. 
    \begin{definition}[Differential Privacy,~\citep{DMNS16}]
        For $\eps, \delta \geq 0$, a mechanism $\mechanism$ with input from $\datasetSet$ and output from $\resultSet$ is  
        \emph{ex-ante $(\eps, \delta)$-differentially private} (or simply, \emph{$(\eps, \delta)$-DP)} iff $\Pr[\mechanism(D) \in S] \le e^\eps \Pr[\mechanism(D') \in S] + \delta$,
        for all $S \subseteq \resultSet$ and all $\dataset \sim \dataset'$. 
    \end{definition}

    Modern private learning is largely based on DP-SGD~\citep{AbadiCGMMT016}.
    The privacy analysis of such a mechanism, which involves both subsampling and composition, is often done through R\'{e}nyi DP~\cite{Mir17}, which we recall here.

    Let $\alpha > 1$, and $P$ and $Q$ be two distributions on $\resultSet$.  
    Let $\renyidiv{\alpha}{P}{Q}$ denote the \emph{R\'enyi divergence} of $P$ from $Q$, 
    i.e., $\renyidiv{\alpha}{P}{Q} = \frac{1}{\alpha - 1}\log \sum_{o \in \resultSet} (P(o))^\alpha (Q(o))^{1 - \alpha}$.

    \begin{definition}[R\'enyi DP, \citep{Mir17}]
        For $\alpha > 1$, $\eps \geq 0$, 
        a mechanism $\mechanism$ with input from $\datasetSet$ and output from $\resultSet$
        is \emph{ex-ante $(\alpha, \eps)$-R\'{e}nyi DP} 
        (or simply, \emph{$(\alpha, \eps)$-RDP)} iff
        $\renyidiv{\alpha}{\mechanism(\dataset)}{\mechanism(\dataset')} \le \eps$
        for all $\dataset \sim \dataset'$.
    \end{definition}

    \paragraph{Ex-Post DP.} We also need approximate and R\'{e}nyi variants of ex-post DP. We start with the former since
    it is defined in the literature before this paper.
    \begin{definition}[Ex-post DP, \cite{WRLWN19}]
        For a function $\teps : \resultSet \to \R^{\geq 0}$  and $\delta > 0$,
        a mechanism $\mechanism$ with input from $\datasetSet$ and output from $\resultSet$ is
        \emph{ex-post $(\teps, \delta)$-DP} iff for all $S \subseteq \resultSet$ and all $\dataset \sim \dataset'$, 
        \[
                \sum_{o \in S} \Pr[\mechanism(\dataset) = o] \leq
                \sum_{o \in S} e^{\teps(o)} \Pr[\mechanism(\dataset') = o] + \delta.
        \]
    \end{definition}
    Again, when $\delta = 0$, this coincides with \Cref{def:dp-intro-expost}. 
    Next, we introduce ex-post R\'{e}nyi DP.

    \begin{definition}[Ex-post RDP] 
        For a function $\eps : \resultSet \to \R^{\geq 0}$ and  $\alpha > 1$,          
        a mechanism $\mechanism$ with input from $\datasetSet$ and output from $\resultSet$ is \emph{ex-post 
        $(\alpha, \eps)$-RDP} iff for all $D \sim D'$,
        \[
            \sum_{o \in \resultSet} \frac{
                (\Pr[\mechanism(D) = o])^{\alpha}
            }{
                (e^{\eps(o)} \cdot \Pr[\mechanism(\dataset') = o])^{\alpha - 1}
            } \le 1.
        \]
    \end{definition}
    We note that if $\teps$ is a constant function, ex-ante and ex-post are equivalent for all DP notions stated.

    In the case of ex-ante DP, it is known that ex-ante pure-DP is a stronger notion than ex-ante RDP, 
    which in turn is stronger than ex-ante approximate-DP~\citep{Mir17}. We can show here that a similar result holds for their ex-post variants, as stated below. 
    The proof, which follows its ex-ante counterpart, is deferred to the Supplementary Material.
    \begin{lemma} \label{lem:rdp-v-dp}
        Let $\teps : \resultSet \to \R^{\geq 0}$ be a function and $\alpha > 1$, 
        $\delta \in [0, 1]$ be constants.
        \begin{itemize}[nosep]
            \item If $\mechanism$ is ex-post $\teps$-DP, then $\mechanism$ is 
                ex-post $(\alpha, \teps)$-RDP.
            \item If $\mechanism$ is ex-post $(\alpha, \teps)$-RDP, then $\mechanism$ is 
                ex-post $\left(\teps', \delta\right)$-DP, 
                where $\teps'(o) = \teps(o) +  \frac{\log 1 / \delta}{\alpha - 1}.$
        \end{itemize}
    \end{lemma}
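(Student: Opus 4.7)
The plan is to mirror, in the ex-post setting, the two standard inequalities connecting pure-DP, R\'enyi DP, and approximate-DP. Write $P(o) = \Pr[\mechanism(D) = o]$ and $Q(o) = \Pr[\mechanism(D') = o]$ for an arbitrary neighboring pair $D \sim D'$. The key observation is that the ex-post versions all give outcome-dependent bounds, so the classical pointwise arguments go through essentially verbatim, with $\teps$ replaced by $\teps(o)$ inside every expression.

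For the first implication, I would simply rewrite each summand as
\[
  \frac{P(o)^\alpha}{(e^{\teps(o)} Q(o))^{\alpha-1}} \;=\; P(o) \cdot \left(\frac{P(o)}{e^{\teps(o)}Q(o)}\right)^{\alpha-1}.
\]
The ex-post $\teps$-DP hypothesis gives $P(o) \le e^{\teps(o)} Q(o)$ for every single $o$, so the bracketed ratio is at most $1$ and therefore the whole summand is at most $P(o)$. Summing over $o \in \resultSet$ yields $\sum_o P(o) \le 1$, which is the ex-post $(\alpha,\teps)$-RDP bound.

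For the second implication, fix $S \subseteq \resultSet$ and split it into a ``good'' and ``bad'' part via the pointwise test that appears in $\teps'$. Concretely, let
\[
  S_{\mathrm{bad}} \;=\; \{\, o \in S : P(o) > e^{\teps'(o)} Q(o)\,\}, \qquad S_{\mathrm{good}} = S \setminus S_{\mathrm{bad}}.
\]
On $S_{\mathrm{good}}$ we trivially get $\sum_{o \in S_{\mathrm{good}}} P(o) \le \sum_{o \in S_{\mathrm{good}}} e^{\teps'(o)} Q(o)$, which is absorbed into the right-hand side of the ex-post $(\teps',\delta)$-DP inequality. It remains to show that the total mass on $S_{\mathrm{bad}}$ is at most $\delta$. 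For any $o \in S_{\mathrm{bad}}$, the definition of $\teps'$ rearranges to $\bigl(P(o)/(e^{\teps(o)}Q(o))\bigr)^{\alpha-1} > 1/\delta$, and multiplying by $P(o)$ and summing gives
\[
  \frac{1}{\delta}\sum_{o \in S_{\mathrm{bad}}} P(o) \;<\; \sum_{o \in S_{\mathrm{bad}}} \frac{P(o)^{\alpha}}{(e^{\teps(o)} Q(o))^{\alpha-1}} \;\le\; 1,
\]
where the last step uses the ex-post $(\alpha,\teps)$-RDP hypothesis (extending the sum to all of $\resultSet$ only makes it larger). Hence $\sum_{o \in S_{\mathrm{bad}}} P(o) < \delta$, completing the bound.

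I do not anticipate a genuinely hard step: both directions reduce to elementary pointwise manipulations. The only thing to be careful about is that $\teps(o)$ appears \emph{inside} the sum in the RDP definition, so when converting to approximate-DP one must use an outcome-dependent threshold to define $S_{\mathrm{bad}}$ (rather than a single global threshold, as in the ex-ante proof). Once that is set up correctly, the rest is a one-line Markov-type argument against the RDP constraint.
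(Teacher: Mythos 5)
Your proof is correct, and it is essentially the standard ex-ante argument (pointwise bound of the R\'enyi summand by $P(o)$ for the first bullet, and a Markov-type bound on the mass of the ``bad'' set for the second) adapted to the ex-post setting, which is exactly the route the paper indicates its deferred proof takes. Your observation that the bad set must be defined with the outcome-dependent threshold $e^{\teps'(o)}Q(o)$ rather than a single global one is precisely the right adjustment, and the rest goes through verbatim.
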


    \paragraph{DP Selection Problem.} The main focus of our paper is on the \emph{DP selection} problem, which can be defined as follows. 
    There are $d$ mechanisms $\cM_1, \dots, \cM_d: \cD \to \cO$ where $\cM_i$ is ex-ante DP (or ex-ante RDP). 
    Following \citep{PT22}, we assume that $\cO$ is a totally ordered set.
    The goal is to, after running $\cM_1, \dots, \cM_d$ possibly multiple times, output $(o, i)$ where $i \in [d]$ and $o \in \cO$ is an output from one of the runs of $\cM_i$. Occasionally, we also allow an output $\perp$ to indicate that no good output was found.

    A classic application of DP selection is in \emph{DP hyperparameter tuning} of ML mechanisms.
    Here, each $\cM_i$ can represent the mechanism with different configuration of parameters (including the privacy budgets) and the output $\cM_i(D)$ is the private ML
    model together with the accuracy score on the test set.\footnote{%
        If the test set is considered sensitive, then we can add noise to achieve DP with respect to the test set.
    }
    Our setting also generalizes the widely-used exponential mechanism in which case $\cM_i$ can be thought of as 
    outputting the DP score of the $i$th candidate~\citep{McSherryT07}.
    \sasha{Should we say `outputting the noised out score of the $i$th candidate` instead?}
    
    \paragraph{Probability Notation.}
    For a distribution $\cP$, let $\supp(\cP)$ denote its support. 
    For $i \in \supp(\cP)$, let ${\cP}(i)$ denote the probability mass (resp.,  density) at $i$.
    For a subset $I$, let ${\cP}(I) = \sum_{i \in I} {\cP}(i)$ (resp.,  $\int_{I} \cP(i) di$). 
    Let $\Ber(p)$ denote the Bernoulli distribution with parameter $p$, i.e., the distribution on $\{0, 1\}$
    such that the probability of $1$ equals $p$.  
    Let $\Geo{p}$ denote the geometric distribution with \emph{failure probability} $p \in [0, 1]$,
    i.e., the distribution on $\Z_{\geq 0}$ such that $\Geo{p}(k) = (1 - p)p^k$.  
    Throughout this work, we will use the following property of the Geometric distribution in our proofs:
    \begin{align} \label{eq:coupling-geom}
    \Geo{p}(u) \leq p^{u - v} \cdot \Geo{p}(v) & & \forall u, v \in \Z \text{ such that } u \leq v.
    \end{align}
    Note that the above inequality is in fact an equality for $u \geq 0$.

    Finally, let $\Exp{\lambda}$ denote the exponential distribution with parameter $\lambda > 0$, i.e., the distribution on $\R_{> 0}$ such that $\Exp{\lambda}(x) = \lambda e^{-\lambda x}$.
    
    \section{Warm-Up: Ex-Post AboveThreshold Mechanisms}

Before we present our full ex-post DP Hyperparameter Tuning algorithm, it would be helpful to recall the Sparse Vector Technique~\citep{DworkNRRV09}. In particular, our ex-post DP Hyperparameter Tuning algorithm derives inspiration from the so-called AboveThreshold mechanism and its analysis.

\subsection{Classic AboveThreshold Mechanism}

To state the AboveThreshold mechanism, recall that the sensitivity of a function\footnote{We remark that we keep the range of $f$ discrete for simplicity since we will only use the discrete setting for our generalized algorithm; the result here can be easily extended to continuous function $f$ by changing the Geometric distribution to the Exponential distribution.} $f: \cD \to \Z$ is defined as $\Delta(f) := \max_{D \sim D'} |f(D) - f(D')|$, where the maximum is taken over all neighboring input datasets $D \sim D'$. The setting here is that we are given sensitivity-1 functions\footnote{Again, it is simple to extend the analysis for larger sensitivity; we assume that the sensitivity is $1$ since it suffices for our subsequent modifications.} $f_1, \dots, f_d$ and the goal is to output the first index $i$ such that $f_i(D)$ is at least zero.\footnote{We can easily handle non-zero threshold $\tau$ by considering $f_i - \tau$ instead.} The mechanism works by first sampling a Geometric noise $k$ to be its noisy threshold; then, for each $f_i$, we add an independent Geometric noise $y_i$ to it and check if it exceeds the threshold. If it does, we output $i$ and terminate. It turns out that, in addition to $i$, we can get an estimate of $f_i$ (via $f_i(D) + y_i - k$) for free without any additional privacy cost~\citep{DWXWZK23:free-count-release}. A full description is given in \Cref{alg:at-classic}. 

\begin{algorithm}
        \caption{AboveThreshold Mechanism}
        \label{alg:at-classic}
        \begin{algorithmic}
            \PARAMETERS Sensitivity-1 functions $f_i: \cD \to \Z$ and budget parameters $\epsilon_i$ for $i \in [d]$, and additional privacy budget $\eps' > 0$. \\
            \REQUIRE Dataset $D$.
            \STATE Sample $k \sim \Geo{e^{-\epsilon'}}$ \hfill\COMMENT{threshold noise}
            \FOR{$i = 1, \dots, d$}
                \STATE Sample $y_i \sim \Geo{e^{-\epsilon_i}}$ \hfill\COMMENT{query noise}
                \IF{$f_i(D) + y_i \geq k$} 
                \RETURN $(f_i(D) + y_i - k, i)$ and terminate
                \ENDIF
            \ENDFOR
            \RETURN $\perp$
        \end{algorithmic}
    \end{algorithm}

We remark that this algorithm generalizes the standard ex-ante DP AboveThreshold mechanism since we allow the noise for each $f_i$ to have different privacy budget parameter $\eps_i$. Indeed, with this mechanism, we show that the ex-post privacy budget spent for releasing $f_i$ is only $2\eps_i + \eps'$, as stated below. This generalizes the standard guarantee where $\eps_i$'s are all equal.

\begin{theorem}[Ex-post AboveThreshold]\label{theorem:at-classic}
        Define a function $\teps$ such that 
        $\teps(o, i) = 2 \eps_i + \eps'$ for all $o \in \Z_{\geq 0}, i \in [d]$ and $\teps (\perp) = \eps'$.
        Then, \Cref{alg:at-classic} is 
        ex-post $\teps$-DP.
    \end{theorem}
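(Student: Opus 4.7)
I would establish the two required bounds separately: (i) $\Pr[\mechanism(D) = \perp] \leq e^{\eps'} \Pr[\mechanism(D') = \perp]$ for every $D \sim D'$, and (ii) $\Pr[\mechanism(D) = (o, i)] \leq e^{2\eps_i + \eps'} \Pr[\mechanism(D') = (o, i)]$ for every $D \sim D'$, $i \in [d]$, and $o \in \Z_{\geq 0}$. Both arguments follow the classical Sparse Vector Technique template, but I would track the per-query budgets $\eps_i$ separately so that the ex-post cost at index $i$ depends only on $\eps_i$ and $\eps'$, not on the other $\eps_j$'s.

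For the $\perp$ case, I would condition on the (data-independent) query noises $y_1, \ldots, y_d$ and bound $\sum_{k \geq 0} \Geo{e^{-\eps'}}(k) \mathbf{1}[k > h(D)]$, where $h(D) := \max_j(f_j(D) + y_j)$. Substituting $k \mapsto k+1$ and using $\Geo{e^{-\eps'}}(k) \leq e^{\eps'} \Geo{e^{-\eps'}}(k+1)$ from \eqref{eq:coupling-geom} pays a factor of $e^{\eps'}$; in return, since $h(D') \leq h(D) + 1$, the event $k > h(D)$ implies $k+1 > h(D')$, so the shifted sum is at most $e^{\eps'} \Pr[\mechanism(D') = \perp \mid y_1, \ldots, y_d]$.

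For the main case $\mechanism(D) = (o, i)$, I would condition on $y_1, \ldots, y_{i-1}$, so the conditional probability becomes
\[
\sum_{k \geq 0} \Geo{e^{-\eps'}}(k) \cdot \Geo{e^{-\eps_i}}(k + o - f_i(D)) \cdot \mathbf{1}[g(D) < k] \cdot \mathbf{1}[k + o - f_i(D) \geq 0],
\]
where $g(D) := \max_{j < i}(f_j(D) + y_j)$ and the $\Geo{e^{-\eps_i}}$ factor encodes the fact that the constraint $f_i(D) + y_i - k = o$ pins down $y_i = k + o - f_i(D)$. I would again substitute $k \mapsto k+1$; this single shift does triple duty. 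The $\Geo{e^{-\eps'}}$ factor costs $e^{\eps'}$; the argument of $\Geo{e^{-\eps_i}}$ shifts from $k + o - f_i(D)$ to $(k+1) + o - f_i(D')$, i.e.\ by $1 + (f_i(D) - f_i(D')) \in \{0,1,2\}$, which costs at most $e^{2\eps_i}$ via \eqref{eq:coupling-geom}; and the threshold event $g(D) < k$ turns into $g(D') < k+1$ using $g(D') \leq g(D) + 1$. Re-indexing $k' = k+1$ and (harmlessly) extending the sum back to include $k' = 0$ yields $e^{2\eps_i + \eps'} \Pr[\mechanism(D') = (o, i) \mid y_1, \ldots, y_{i-1}]$; taking expectation over $y_1, \ldots, y_{i-1}$ finishes the argument.

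The point I expect to require the most care is the simultaneous bookkeeping of the single shift $k \mapsto k+1$ across all three factors, and specifically verifying that the shift of the $\Geo{e^{-\eps_i}}$ argument is always by a \emph{non-negative} amount, so that the shifted argument stays in the non-negative domain where \eqref{eq:coupling-geom} is tight with the intended sign. This non-negativity is exactly what forces the cost to be $2\eps_i$ rather than blowing up at the boundary; the rest of the proof is a direct application of \eqref{eq:coupling-geom}.
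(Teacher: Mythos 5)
Your proposal is correct and follows essentially the same argument as the paper's proof: the same shift $k \mapsto k+1$ of the threshold noise, paying $e^{\eps'}$ for the threshold term and at most $e^{2\eps_i}$ for the released query's Geometric factor via \eqref{eq:coupling-geom}, with the earlier queries' below-threshold events being monotone under the shift. Conditioning on $y_1,\dots,y_{i-1}$ (rather than writing out the product of $\Geo{e^{-\eps_j}}(<k-f_j(D))$ terms as the paper does) is only a presentational difference.
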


Our proof closely mirrors the ex-ante AboveThreshold DP proof. Namely, for neighboring datasets $D, D'$ and output $(o, i)$, we can couple the Geometric noises such that $k' = k + 1, y'_i = y_i + 1 + f_i(D) - f_i(D')$ and all other noises remain the same. It is not hard to see that, if the algorithm returns $(o, i)$ on $D$, it returns $(o, i)$ on $D'$ as well. Furthermore, due to the property \eqref{eq:coupling-geom} of the Geometric distribution, the probability decreases by at most $e^{2\eps_i + \eps'}$ factor. This idea is formalized below.

\begin{proof}[Proof of \Cref{theorem:at-classic}]
Consider neighboring datasets $\dataset \sim \dataset'$.  
Let $\cA, \cA'$ be the output distributions of \Cref{alg:at-classic} on $\dataset, \dataset'$, respectively. 
Below, we write $\Geo{p}(< x)$ as a shorthand for $\Geo{p}(\{x - 1, x - 2, \dots\}) = \sum_{y = 0}^{x - 1} \Geo{p}(y)$.

        First, consider any output $(o, i)$. This output happens exactly when $y_i = o + k - f_i(D)$ and $y_j < k - f_j(D)$ for all $j < i$.
        Thus, we have
        \begin{align} \label{eq:above-at-classic-expanded}
        \cA(o, i)
        &= \sum_{k=0}^{\infty} \Geo{e^{-\epsilon'}}(k) \cdot \Geo{e^{-\epsilon_{i}}}(o + k - f_i(D)) \prod_{j=1}^{i-1} \Geo{e^{-\epsilon_{j}}}(< k - f_j(D))
        \end{align}

        Since $\Delta(f_i) \leq 1$, we have 
        $o + k - f_i(D) \leq o + k + 1 - f_i(D')$; applying \eqref{eq:coupling-geom} then yields 
        \begin{align} 
        \Geo{e^{-\epsilon_{i}}}(o + k- f_i(D)) &\leq e^{\epsilon_{i} \cdot (1 - f_i(D') + f_i(D))} \cdot \Geo{e^{-\epsilon_{i}}}(o + k + 1 - f_i(D')) \nonumber \\ &\leq e^{2\epsilon_{i}} \cdot \Geo{e^{-\epsilon_{i}}}(o + k + 1 - f_i(D')), \label{eq:released-query-diff}
        \end{align}
        where the second inequality again uses $\Delta(f_i) \leq 1$.

        Furthermore, $\Delta(f_j) \leq 1$ implies $\Geo{e^{-\epsilon_{j}}}(< k - f_j(D)) \leq \Geo{e^{-\epsilon_{j}}}(< k + 1 - f_j(D'))$. Moreover, \eqref{eq:coupling-geom} implies $\Geo{e^{-\epsilon'}}(k) \leq e^{\eps'} \cdot \Geo{e^{-\epsilon'}}(k + 1)$. Plugging these two inequalities and \eqref{eq:released-query-diff} into \eqref{eq:above-at-classic-expanded} yields
        \begin{align*}
        \cA(o, i)
        &\leq \sum_{k=0}^{\infty} e^{\eps'} \cdot \Geo{e^{-\epsilon'}}(k+1) \cdot e^{2\epsilon_{i}} \cdot \Geo{e^{-\epsilon_{i}}}(o + k + 1 - f_i(D')) \cdot \prod_{j=1}^{i-1} \Geo{e^{-\epsilon_{j}}}(< k + 1 - f_j(D')) \\
        &= e^{2\eps_i + \eps'} \sum_{k=0}^{\infty} \Geo{e^{-\epsilon'}}(k+1) \cdot \Geo{e^{-\epsilon_{i}}}(o + k + 1 - f_i(D')) \cdot \prod_{j=1}^{i-1} \Geo{e^{-\epsilon_{j}}}(< k + 1 - f_j(D')) \\
        &\leq e^{2\eps_i + \eps'} \cdot \cA'(o, i).
        \end{align*}

        Next, consider the output $\perp$. This output happens when $y_j < k - f_j(D)$ for all $j \in [d]$. Thus, we have
        \begin{align*}
        \cA(\perp) &~=~ \sum_{k=0}^{\infty} \Geo{e^{-\epsilon'}}(k) \cdot \prod_{j \in [d]} \Geo{e^{-\epsilon_{j}}}(< k - f_j(D)) \\
        &~\leq~ \sum_{k=0}^{\infty} \Geo{e^{-\epsilon'}}(k) \cdot \prod_{j \in [d]} \Geo{e^{-\epsilon_{j}}}(< k + 1 - f_j(D')) \\
        &~\leq~ \sum_{k=0}^{\infty} e^{\eps'} \cdot \Geo{e^{-\epsilon'}}(k + 1) \cdot \prod_{j \in [d]} \Geo{e^{-\epsilon_{j}}}(< k + 1 - f_j(D')) \\
        &~=~ e^{\eps'}  \sum_{k=0}^{\infty} \Geo{e^{-\epsilon'}}(k + 1) \cdot \prod_{j \in [d]} \Geo{e^{-\epsilon_{j}}}(< k + 1 - f_j(D')) \\
        &~\leq~ e^{\eps'} \cdot \cA'(\perp),
        \end{align*}
        where again we use $\Delta(f) \leq 1$ in the first inequality and \eqref{eq:coupling-geom} in the subsequent inequality.
\end{proof}

\subsubsection{Optimized Noise via Monotonicity}

Let $\succeq$ denote any total order on $\cD$. We say that a function $f$ is monotone (with respect to $\succeq, \sim$) iff the following holds: $f(D) \geq f(D')$ for all $D \sim D'$ such that $D \succeq D'$. An example of this is when $\sim$ denotes an add-remove neighboring notion, i.e., $D \sim D'$ iff $D$ results from adding or removing a user from $D'$; in this case, we may let $\succeq$ be based on the size of the dataset, and $f$ is monotone iff adding a user does not decrease the function value. Such a property holds when $f$ is counting the number of users satisfying certain criteria, which is an example used in our experiment in \Cref{sec:exp}.

For monotone $f$, the same algorithm (\Cref{alg:at-classic}) yields a better ex-post guarantee, where we do not need to pay the factor of $2$ in front of $\eps_i$, as stated below. The proof proceeds similarly to before except that, in the monotone case, either (i) $f_i(D') \geq f_i(D)$ in which case the difference $y'_i - y_i$ is already at most one (instead of two as before), or (ii) $f_i(D') \leq f_i(D)$ in which case we can instead couple with $k' = k, y'_i = y_i + f_i(D) - f_i(D')$ resulting in $y'_i - y_i \leq 1$ again.

\begin{theorem}[Ex-post Monotone AboveThreshold]\label{theorem:at-classic-monotone}
        Define a function $\teps$ such that 
        $\teps(i) = \eps_i + \eps'$ and $\teps (\perp) = \eps'$.
        If $f$ is monotone, then \Cref{alg:at-classic} is 
        ex-post $\teps$-DP.
    \end{theorem}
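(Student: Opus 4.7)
The plan is to adapt the proof of \Cref{theorem:at-classic} by exploiting monotonicity to save a factor of $e^{\eps_i}$. The key observation is that since $\succeq$ is a total order on $\cD$, any neighboring pair $D \sim D'$ satisfies either $D' \succeq D$ or $D \succeq D'$; by monotonicity of each $f_j$, this determines the sign of $f_j(D) - f_j(D')$ uniformly across all $j \in [d]$. I would split into these two cases and reuse the expansion \eqref{eq:above-at-classic-expanded} in each.

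In the first case, $D' \succeq D$, so $f_j(D') - f_j(D) \in \{0, 1\}$ for every $j$. I would apply the same index shift $k \mapsto k + 1$ as in the proof of \Cref{theorem:at-classic}. The threshold term $\Geo{e^{-\eps'}}(k)$ contributes a factor of $e^{\eps'}$ via \eqref{eq:coupling-geom}, and the $j < i$ terms are bounded exactly as before. The improvement comes in the $i$-th term, where
\[
(o + k + 1 - f_i(D')) - (o + k - f_i(D)) \;=\; 1 - (f_i(D') - f_i(D)) \;\in\; \{0, 1\},
\]
so \eqref{eq:coupling-geom} yields at most an $e^{\eps_i}$ factor rather than $e^{2\eps_i}$. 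Multiplying the contributions gives $\cA(o, i) \leq e^{\eps_i + \eps'} \cdot \cA'(o, i)$.

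In the second case, $D \succeq D'$, so $f_j(D) - f_j(D') \in \{0, 1\}$ for every $j$. Here I would not shift the summation index at all: the threshold term $\Geo{e^{-\eps'}}(k)$ then appears identically on both sides and contributes no factor. For the $i$-th term, $(o + k - f_i(D')) - (o + k - f_i(D)) = f_i(D) - f_i(D') \in \{0, 1\}$, which again yields at most $e^{\eps_i}$ via \eqref{eq:coupling-geom}. For $j < i$, monotonicity gives $k - f_j(D) \leq k - f_j(D')$, so $\Geo{e^{-\eps_j}}(< k - f_j(D)) \leq \Geo{e^{-\eps_j}}(< k - f_j(D'))$ term-by-term. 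The resulting bound is $\cA(o, i) \leq e^{\eps_i} \cdot \cA'(o, i) \leq e^{\eps_i + \eps'} \cdot \cA'(o, i)$.

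The $\perp$ output is handled analogously: in the first case one shifts $k \mapsto k + 1$ and pays a single $e^{\eps'}$ factor, while in the second case no shift is needed and no factor is paid, giving $\cA(\perp) \leq e^{\eps'} \cdot \cA'(\perp)$ in either case. I do not anticipate any serious obstacle, since the bookkeeping mirrors that of \Cref{theorem:at-classic}; the subtle point to verify is that monotonicity really does let us fix a single orientation of $D$ versus $D'$ so that the same $\pm 1$ analysis applies \emph{simultaneously} to all indices $j$---otherwise one would be forced to pay the factor of $2$ on the $i$-th term again.
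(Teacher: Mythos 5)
Your proposal is correct and follows essentially the same argument as the paper: the same case split on whether $D' \succeq D$ or $D \succeq D'$, with the $k \mapsto k+1$ coupling in the first case and no shift in the second, yielding the $e^{\eps_i + \eps'}$ bound via \eqref{eq:coupling-geom} exactly as in the paper's proof. The only cosmetic difference is your handling of $\perp$ (the paper simply reuses the $\perp$ argument from \Cref{theorem:at-classic} without a case split), which changes nothing substantive.
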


 \begin{proof}
 We use similar notations as in the proof of \Cref{theorem:at-classic}. The case of $\perp$ output is exactly the same as in that proof. For the output $(o, i)$, we consider the following two subcases, based on whether $D' \succeq D$. First, let us consider the case $D' \succeq D$. In this case, the proof is exactly the same as before except that, since $f_i(D') \geq f_i(D)$, in \eqref{eq:released-query-diff}, we instead get
 \begin{align*} 
        \Geo{e^{-\epsilon_{i}}}(o + k- f_i(D)) &\leq e^{\epsilon_{i} \cdot (1 - f_i(D') + f_i(D))} \cdot \Geo{e^{-\epsilon_{i}}}(o + k + 1 - f_i(D')) \\ &\leq e^{\epsilon_{i}} \cdot \Geo{e^{-\epsilon_{i}}}(o + k + 1 - f_i(D')).
 \end{align*}
 Following the same line of reasoning as before, we then get $\cA(o, i) \leq e^{\eps_i + \eps'} \cdot \cA'(o, i)$ as desired.

 Finally, let us consider the case $D \succeq D'$. In this case, since $f_j(D) \geq f_j(D')$, we have
 \begin{align*}
        \cA(o, i)
        &= \sum_{k=0}^{\infty} \Geo{e^{-\epsilon'}}(k) \cdot \Geo{e^{-\epsilon_{i}}}(o + k - f_i(D)) \prod_{j=1}^{i-1} \Geo{e^{-\epsilon_{j}}}(< k - f_j(D)) \\
        &\leq \sum_{k=0}^{\infty} \Geo{e^{-\epsilon'}}(k) \cdot \Geo{e^{-\epsilon_{i}}}(o + k - f_i(D)) \prod_{j=1}^{i-1} \Geo{e^{-\epsilon_{j}}}(< k - f_j(D'))
 \end{align*}
 Since $o + k - f_i(D) \leq o + k - f_i(D')$, we can apply \eqref{eq:coupling-geom} to arrive at 
        \begin{align*} 
        \Geo{e^{-\epsilon_{i}}}(o + k- f_i(D)) &\leq e^{\epsilon_{i} \cdot (f_i(D) - f_i(D'))} \cdot \Geo{e^{-\epsilon_{i}}}(o + k - f_i(D')) \\ &\leq e^{\epsilon_{i}} \cdot \Geo{e^{-\epsilon_{i}}}(o + k - f_i(D')),
        \end{align*}
        where the second inequality follows from $\Delta(f_i) \leq 1$.

 Combining the above two inequalities then gives
  \begin{align*}
        \cA(o, i)
        &\leq \sum_{k=0}^{\infty} \Geo{e^{-\epsilon'}}(k) \cdot \Geo{e^{-\epsilon_{i}}}(o + k - f_i(D)) \prod_{j=1}^{i-1} \Geo{e^{-\epsilon_{j}}}(< k - f_j(D')) \\
        &\leq \sum_{k=0}^{\infty} \Geo{e^{-\epsilon'}}(k) \cdot e^{\epsilon_{i}} \cdot \Geo{e^{-\epsilon_{i}}}(o + k - f_i(D')) \prod_{j=1}^{i-1} \Geo{e^{-\epsilon_{j}}}(< k - f_j(D')) \\
        &= e^{\eps_i} \cdot \cA'(o, i) \\
        &\leq e^{\eps_i + \eps'} \cdot \cA'(o, i),
 \end{align*}
 which concludes our proof.
 \end{proof}

 \subsection{Generalized AboveThreshold Mechanism via Random Dropping}

In this subsection, we present a new generalized algorithm for AboveThreshold with ex-post DP guarantees (\Cref{alg:generalized-at}). We consider the following general setting: We have mechanisms $M_1, \dots, M_d: \cD \to \cO$ and the goal is to output the first mechanism such that $M_i(D)$ is at least a certain threshold $\tau \in \cO$. Our main idea is \emph{random dropping}, where, instead of always comparing $M_i(D)$ with $\tau$, we only compare with a certain probability; otherwise, we drop $M_i(D)$ completely. While this bears some similarity with the \emph{random stopping} technique of \cite{LT19}, our main innovation is the use of \emph{correlated randomness} $k$ that is sampled at the beginning of the algorithm and determines the dropping probabilities of \emph{all} the mechanisms. This idea is inspired by the above analysis of the AboveThreshold mechanism. Indeed, the high-level structure of our proof follows that of AboveThreshold: we couple $k$ with $k + 1$ in the two neighboring datasets, and bound the ratio of the output probabilities in the two cases. This is formalized in the proof below.

 \begin{algorithm}
        \caption{Generalized AboveThreshold Mechanism with Random Dropping.}
        \label{alg:generalized-at}
        \begin{algorithmic}
            \PARAMETERS Distribution $\cE$, $\epsilon_i$-DP mechanisms $\cM_i: \cD \to \cO$, additional privacy budget $\eps' > 0$, and threshold $\tau \in \cO$ \\
            \REQUIRE Dataset $D$.
            \STATE Sample $k \sim \Geo{e^{-\epsilon'}}$
            \FOR{$i = 1, \dots, d$}
                \STATE Sample $y_i \sim \Ber(e^{-\epsilon_i \cdot k})$ \hfill\COMMENT{random drop}
                \IF{$y_i = 1$} 
                    \STATE $o_i \gets \mechanism_i(D_i)$
                    \IF{$o_i \geq \tau$}
                    \RETURN $(o_i, i)$
                    \ENDIF
                \ENDIF
            \ENDFOR
            \RETURN $\perp$
        \end{algorithmic}
    \end{algorithm}

 Our privacy guarantee for \Cref{alg:generalized-at} is stated below. 
    It says that, if the final output is from $\mechanism_i$, then the privacy budget we pay is only $2\eps_i + \eps'$. 
    The value of $\eps'$ can be arbitrarily small, although setting it too small results in a larger drop probability. 
    The latter can be mitigated by repeating each mechanism $\mechanism_i$ multiple times in the input, which allows us to set that the desired expected number of times that each mechanism is run.

 \begin{theorem}[Ex-post Generalized AboveThreshold]\label{theorem:generalized-at}
        Define a function $\teps$ such that 
        $\teps(o, i) = 2 \eps_i + \eps'$ and $\teps (\perp) = \eps'$.
        Then, \Cref{alg:generalized-at} is 
        ex-post $\teps$-DP.
    \end{theorem}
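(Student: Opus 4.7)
The plan is to mirror the analysis of \Cref{theorem:at-classic}: express the probability of each output as a sum over the outer noise $k$, then couple $k$ on $D$ with $k+1$ on $D'$ and bound term by term. Let $q_j(D) := \Pr[\cM_j(D) \geq \tau]$ and define $p_j(k, D) := 1 - e^{-\eps_j k} q_j(D)$. Conditioned on $k$, iteration $j$ fails to return if and only if $y_j = 0$, or $y_j = 1$ and $\cM_j(D) < \tau$, which has probability exactly $p_j(k, D)$. Therefore
\begin{align*}
\cA(o, i) \;=\; \sum_{k=0}^\infty \Geo{e^{-\eps'}}(k) \cdot \Big(\prod_{j < i} p_j(k, D)\Big) \cdot e^{-\eps_i k} \cdot \Pr[\cM_i(D) = o],
\end{align*}
and analogously for $\cA'(o, i)$ with $D'$ in place of $D$.

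Next, I would re-index the $D'$-side sum by replacing $k$ with $k+1$ and upper-bound $\cA(o,i)$ via three elementary inequalities: (i) $\Geo{e^{-\eps'}}(k) \leq e^{\eps'} \cdot \Geo{e^{-\eps'}}(k+1)$ from \eqref{eq:coupling-geom}; (ii) the identity $e^{-\eps_i k} = e^{\eps_i} \cdot e^{-\eps_i(k+1)}$, arising from the shifted Bernoulli drop parameter; and (iii) the $\eps_i$-DP of $\cM_i$, which gives $\Pr[\cM_i(D) = o] \leq e^{\eps_i} \cdot \Pr[\cM_i(D') = o]$. Multiplied together these contribute the factor $e^{2\eps_i + \eps'}$, matching the claimed ex-post budget for outputs of the form $(o, i)$.

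The main obstacle, and the place where the random-dropping idea really does its work, is bounding the ``did not return at iteration $j$'' factors $\prod_{j<i} p_j(k, D)$ \emph{without paying any budget for the earlier mechanisms $\cM_j$}. Here I would use the $\eps_j$-DP of $\cM_j$ in the direction $q_j(D') \leq e^{\eps_j} \cdot q_j(D)$, together with the identity $e^{-\eps_j (k+1)} = e^{-\eps_j} \cdot e^{-\eps_j k}$, to obtain $e^{-\eps_j(k+1)} \cdot q_j(D') \leq e^{-\eps_j k} \cdot q_j(D)$, i.e., $p_j(k, D) \leq p_j(k+1, D')$. This is the exact analogue of the ``previous queries stay below threshold'' step in the classic AboveThreshold proof, except that the cancellation between the random-drop factor $e^{-\eps_j k}$ and the DP slack for $\cM_j$ now replaces the sensitivity argument. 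Plugging this into the expression for $\cA(o, i)$ and observing that the re-indexed sum (starting at $k' = k+1 = 1$) is bounded by the full sum $\cA'(o, i)$ (starting at $k' = 0$) completes the $(o, i)$ case.

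Finally, for the $\perp$ output the same coupling applies with the product ranging over all $j \in [d]$ and no additional $\cM_i$-factor, so only the geometric shift inequality is needed and one immediately obtains $\cA(\perp) \leq e^{\eps'} \cdot \cA'(\perp)$, which matches $\teps(\perp) = \eps'$ and concludes the proof.
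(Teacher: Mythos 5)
Your proposal is correct and follows essentially the same route as the paper's proof: the same decomposition over the threshold noise $k$, the shift coupling $k \mapsto k+1$, the two uses of $\eps_i$-DP of $\cM_i$ together with the geometric shift to get $e^{2\eps_i+\eps'}$, and the bound $1 - e^{-\eps_j k}\Pr[\cM_j(D)\geq\tau] \leq 1 - e^{-\eps_j(k+1)}\Pr[\cM_j(D')\geq\tau]$ for the earlier, dropped-or-below-threshold mechanisms. The only (trivial) point left implicit is that for $o < \tau$ both output probabilities are zero, so the claimed inequality holds vacuously there.
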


 \begin{proof}
 Consider neighboring datasets $\dataset \sim \dataset'$. 
Let $\cA, \cA'$ be the output distributions of \Cref{alg:maximum-selection} on $\dataset, \dataset'$, 
respectively and let $\cQ_i, \cQ'_i$ be the output distributions of $\mechanism_i$ on $\dataset,\dataset'$, respectively. Furthermore, let $\cO_{\geq \tau} := \{o' \in \cO \mid o' \geq \tau\}$.

Consider any output $(o, i) \in \cO \times [d]$. Note that, if $o < \tau$, then $\cA(o, i) = \cA'(o, i) = 0$. Otherwise, if $o \geq \tau$, then the algorithm outputs $(o, i)$ iff $y_j = 0$ or $o_j < \tau$ for all $j < i$, and $y_i = 1$ and $o_i = o$. Thus, we have
\begin{align*}
\cA(o, i) &= \sum_{k = 0}^\infty 
                    \Geo{e^{-\epsilon'}}(k) \cdot e^{-\epsilon_i k} \cQ_i(o) \cdot
                                \prod_{j < i} \left(1 - e^{-\epsilon_j k} \cQ_j(\cO_{\geq \tau})\right).
\end{align*}
Since $\mechanism_j$ is $\epsilon_j$-DP, it holds that $\cQ_j(\cO_{\geq \tau}) \ge e^{-\epsilon_j} \cQ'_j(\cO_{\geq \tau})$. Similarly, we have $\cQ_i(o) \leq e^{\eps_i} \cdot Q'_i(o)$. Finally, \eqref{eq:coupling-geom} implies that $\Geo{e^{-\epsilon'}}(k) \leq e^{\eps'} \cdot \Geo{e^{-\epsilon'}}(k+1)$. Plugging these into the above gives
\begin{align*}
\cA(o, i) &\leq \sum_{k = 0}^\infty 
                    \left(e^{\eps'} \cdot \Geo{e^{-\epsilon'}}(k+1)\right) \cdot e^{-\epsilon_i k} \cdot \left(e^{\eps_i} \cQ'_i(o)\right) \cdot
                                \prod_{j \neq i} \left(1 - e^{-\epsilon_j k} \cdot \left(e^{-\eps_j} \cQ'_j(\cO_{\geq \tau})\right)\right) \\
        &= e^{2\eps_i + \eps'} \sum_{k = 0}^\infty 
                    \Geo{e^{-\epsilon'}}(k + 1) \cdot e^{-\epsilon_i(k+1)} \cQ'_i(o) \cdot
                                \prod_{j \neq i} \left(1 - e^{-\epsilon_j (k + 1)} \cQ'_j(\cO_{\geq \tau})\right) \\
        &\leq e^{2\eps_i + \eps'} \cdot \cA'(o, i).
\end{align*}
Finally, consider the output $\perp$. For the algorithm to output $\perp$, we must have $y_j = 0$ or $o_j < \tau$ for all $j \in [d]$. Similar to above, we thus have
\begin{align*}
\cA(\perp) &= \sum_{k = 0}^\infty 
                    \Geo{e^{-\epsilon'}}(k) \cdot \prod_{j \in [d]} \left(1 - e^{-\epsilon_j k} \cQ_j(\cO_{\geq \tau})\right). \\
           &\leq \sum_{k = 0}^\infty 
                    \left(e^{\eps'} \cdot \Geo{e^{-\epsilon'}}(k+1)\right) \cdot \prod_{j \in [d]} \left(1 - e^{-\epsilon_j k} \cdot \left(e^{-\eps_j} \cQ'_j(\cO_{\geq \tau})\right)\right). \\
          &= e^{\eps'} \sum_{k = 0}^\infty 
                    \Geo{e^{-\epsilon'}}(k + 1) \cdot
                                \prod_{j \in [d]} \left(1 - e^{-\epsilon_j (k + 1)} \cQ'_j(\cO_{\geq \tau})\right) \\
          &\leq e^{\eps'} \cdot \cA'(\perp). &\qedhere
\end{align*}
 \end{proof}

 \Cref{theorem:generalized-at} can be viewed as a generalization of \cite{LT19} who prove a similar statement for the case of ex-ante DP (with $\eps_i$'s being all equal). Nevertheless, we stress that our mechanism is based on a different technique. As demonstrated in the next section, our technique is more robust as it generalizes to hyperparameter tuning (without a known threshold) with a similar privacy guarantee, whereas \cite{LT19} have to pay a factor of 3 instead of 2 in that setting.

    \section{Ex-Post Hyperparameter Tuning}


    While \Cref{alg:generalized-at} can already be used for hyperparameter tuning, it requires us to have a good threshold $\tau$ (e.g., a desired accuracy guarantee). However, such a good threshold is hard to determine without a significant prior knowledge. In this section, we give an algorithm that works without such a prior knowledge. Our algorithm is based on random dropping with correlated randomness once again, but now we simply construct a candidate set based on all the outputs that are not dropped and then select the maximum among these candidates. \Cref{alg:maximum-selection} contains the full description.


    For convenience, we extend the order on $\cO$ to $(\cO \times [d]) \cup \{\perp\}$, where $\perp$ is the minimum element, and elements in $\cO \times [d]$ are ordered lexicographically.

     \begin{algorithm}
        \caption{Hyperparameter Tuning Mechanism with Random Dropping}
        \label{alg:maximum-selection}
        \begin{algorithmic}
            \PARAMETERS Distribution $\cE$, Mechanisms $\cM_i: \cD \to \cO$ and budget parameters $\eps_i$ for $i \in [d]$\\
            \REQUIRE Dataset $D$.
            \STATE $S \gets \{\perp\}$
            \STATE Sample $k \sim \cE$
            \FOR{$i = 1, \dots, d$}
                \STATE Sample $y_i \sim \Ber(e^{-\eps_i \cdot k})$ \hfill\COMMENT{random drop}
                \IF{$y_i = 1$} 
                    \STATE $o \gets \mechanism_i(D_i)$
                    \STATE $S \gets S \cup \{(o, i)\}$
                \ENDIF
            \ENDFOR
            \RETURN maximum element in $S$\hfill\COMMENT{as per the total order on $(\cO \times [d]) \cup \{\perp\}$}
        \end{algorithmic}
    \end{algorithm}

    \subsection{Pure-DP}

    Our ex-post pure-DP guarantee for \Cref{alg:maximum-selection} is similar to that of \Cref{alg:generalized-at}, with a small improvement: Here we can take $\teps(\perp) = 0$ instead of $\teps(\perp) = \eps'$ as in \Cref{theorem:generalized-at}.

    \begin{theorem}[Ex-post Pure-DP]\label{theorem:max-score}
        Let $\eps' > 0$ and let each $\mechanism_i$ be
        $\eps_i$-DP.
        Define a function $\teps$ such that 
        $\teps(o, i) = 2 \eps_i + \eps'$ and $\teps (\perp) = 0$.
        Then, \Cref{alg:maximum-selection} with $\cE = \Geo{e^{-\eps'}}$ is 
        ex-post $\teps$-DP.
    \end{theorem}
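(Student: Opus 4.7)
The plan is to closely mimic the coupling argument from the proof of \Cref{theorem:generalized-at}, with two modifications reflecting the differences in \Cref{alg:maximum-selection}: first, there is no threshold $\tau$, so the events ``mechanism $j$ loses to the winner'' must be phrased against the selected output $o$ itself (with ties broken by the lexicographic order on $\cO \times [d]$); second, the $\perp$ output is decided purely by the dropping coins $y_j$ and is therefore data-independent.

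First I would dispose of the $\perp$ case. The algorithm returns $\perp$ iff every $y_j = 0$: as soon as some $y_j = 1$, the set $S$ gains an element strictly above $\perp$. Hence
\[
\cA(\perp) \;=\; \sum_{k=0}^\infty \Geo{e^{-\eps'}}(k) \prod_{j \in [d]} \left(1 - e^{-\eps_j k}\right),
\]
which depends only on $k$ and the $\eps_j$'s. In particular $\cA(\perp) = \cA'(\perp)$, which already explains the improvement $\teps(\perp) = 0$ over \Cref{theorem:generalized-at}.

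For the $(o, i)$ case, let $\cQ_i, \cQ'_i$ denote the output distributions of $\mechanism_i$ on $D, D'$, and set $\cO_{>o} := \{o' \in \cO : o' > o\}$ and $\cO_{\geq o} := \{o' \in \cO : o' \geq o\}$. Lexicographic tie-breaking implies that $(o,i)$ is the maximum of $S$ iff $y_i = 1$ and $o_i = o$; for every $j < i$, $y_j = 0$ or $o_j \leq o$; and for every $j > i$, $y_j = 0$ or $o_j < o$. Expanding $\cA(o,i)$ as a sum over $k$ and then applying the same four couplings used in the proof of \Cref{theorem:generalized-at}---namely $\Geo{e^{-\eps'}}(k) \leq e^{\eps'}\Geo{e^{-\eps'}}(k+1)$ from \eqref{eq:coupling-geom}, $\cQ_i(o) \leq e^{\eps_i}\cQ'_i(o)$ from the ex-ante $\eps_i$-DP of $\mechanism_i$, the identity $e^{-\eps_i k} = e^{\eps_i}\, e^{-\eps_i(k+1)}$, and the inequality $1 - e^{-\eps_j k}\cQ_j(\cdot) \leq 1 - e^{-\eps_j(k+1)}\cQ'_j(\cdot)$ (from the ex-ante DP of $\mechanism_j$ applied to $\cO_{>o}$ or $\cO_{\geq o}$ as appropriate)---followed by the reindexing $k \mapsto k+1$ and noting that extending the sum back to $k = 0$ only enlarges it, yields $\cA(o, i) \leq e^{2\eps_i + \eps'} \cA'(o, i)$.

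The main bookkeeping obstacle is simply keeping track of the correct comparison set for each $j$: $\cO_{>o}$ for $j < i$ versus $\cO_{\geq o}$ for $j > i$, coming from the lexicographic tie-breaking. Once this is in place, both sets obey the same DP inequality, so the rest of the argument is entirely parallel to that of \Cref{theorem:generalized-at}. The conceptual point worth highlighting is that, unlike the threshold variant which terminates early, here $\perp$ is determined purely by the Bernoulli coins and carries no data dependence, which is exactly what lets us set $\teps(\perp) = 0$.
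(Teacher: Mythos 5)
Your proposal is correct and follows essentially the same argument as the paper's proof: the data-independence of the $\perp$ event, and for output $(o,i)$ the expansion over $k$ with the coupling $k \mapsto k+1$, using $\cQ_i(o) \le e^{\eps_i}\cQ'_i(o)$ and $\cQ_j(U) \ge e^{-\eps_j}\cQ'_j(U)$ on the comparison sets, which are exactly the paper's $U^j_{o,i} = \{o' : (o',j) > (o,i)\}$ (your $\cO_{>o}$ for $j<i$ and $\cO_{\geq o}$ for $j>i$ spelled out).
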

    \begin{proof}
        Consider neighboring datasets $\dataset \sim \dataset'$. 
        Let $\cA, \cA'$ be the output distributions of \Cref{alg:maximum-selection} on $\dataset, \dataset'$, 
        respectively and let $\cQ_i, \cQ'_i$ be the output distributions of $\mechanism_i$ on $\dataset,\dataset'$, respectively.  
    
        First, the probability that  \Cref{alg:maximum-selection} outputs $\bot$ is independent of 
        input dataset and so $\cA(\perp) = \cA'(\perp)$.  
        Next, consider any output $(o, i) \in \cO \times [d]$. For each $j \in [d] \setminus \{i\}$, let $U^j_{o, i} := \{o' \in \cO \mid (o', j) > (o, i)\}$.
        Since $\mechanism_j$ is $\eps_j$-DP, it holds that 
        $\cQ_j(U^j_{o, i}) \ge e^{-\eps_j} \cQ'_j(U^j_{o, i})$.
        Similarly, we have $\cQ_i(o) \leq e^{\eps_i} \cQ'_i(o)$. Finally, \eqref{eq:coupling-geom} yields $\Geo{e^{-\eps'}}(k) \leq e^{\eps'} \cdot \Geo{e^{-\eps'}}(k+1)$. Thus, we have
        \begin{align*}
        \cA(o, i) &= \sum_{k = 0}^\infty 
                    \Geo{e^{-\eps'}}(k) \cdot e^{-\eps_i k} \cQ_i(o) \cdot
                                \prod_{j \neq i} \left(1 - e^{-\eps_j k} \cQ_j(U^j_{o, i})\right) \\
        &\leq \sum_{k = 0}^\infty 
                    \left(e^{\eps'} \cdot \Geo{e^{-\eps'}}(k+1)\right) \cdot e^{-\eps_i k} \cdot \left(e^{\eps_i} \cQ'_i(o)\right) \cdot
                                \prod_{j \neq i} \left(1 - e^{-\eps_j k} \cdot \left(e^{-\eps_j} \cQ'_j(U^j_{o, i})\right)\right) \\
        &= e^{2\eps_i + \eps'} \sum_{k = 0}^\infty 
                    \Geo{e^{-\eps'}}(k + 1) \cdot e^{-\eps_i(k+1)} \cQ'_i(o) \cdot
                                \prod_{j \neq i} \left(1 - e^{-\eps_j (k + 1)} \cQ'_j(U^j_{o, i})\right) \\
        &\leq e^{2\eps_i + \eps'} \cdot \cA'(o, i) \qedhere
        \end{align*}
    \end{proof}

    The state-of-the-art (ex-ante) pure-DP hyperparameter tuning from~\citep[Corollary 3]{PT22} can only take in a single mechanism $Q$ that is $\eps$-DP.
    To compare this with our mechanism, 
    consider the case where $\cM_1 = \cdots = \cM_d = Q$. In this setting, the two mechanisms are 
     equivalent up to the difference in 
    the distribution of the number of times $Q$ is executed. \Cref{fig:stddev-vs-exp} compares the standard deviation versus the mean of these two distributions. While our distribution has a larger variance, we emphasize its several advantages: our proof is completely elementary and our algorithm is more general as it works for different mechanisms with different $\eps_i$'s values.

    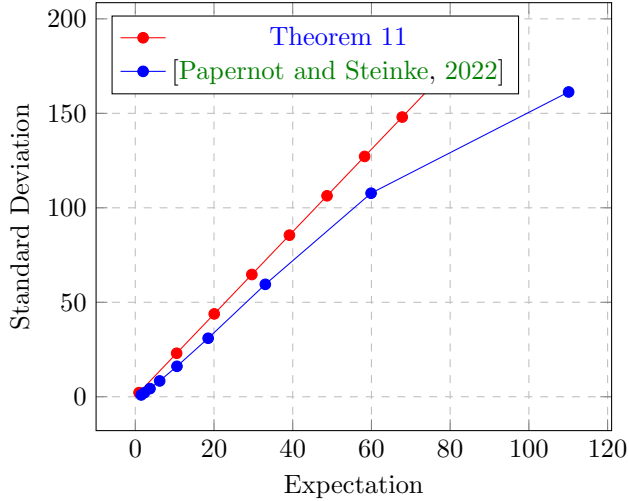
\begin{figure}
        \centering
        \begin{tikzpicture}[scale=1]
    \begin{axis}[
        xlabel={Expectation},
        ylabel={Standard Deviation},
        grid style={dashed},
        grid=major,
        legend pos=north west,
    ]
    
    \addplot[mark=*, color=red] coordinates {
    (0.9552233141976942, 2.1841580037867145)
    (10.507456456174637, 23.02054373477516)
    (20.05968959815158, 43.854834969917235)
    (29.611922740128524, 64.68905545273826)
    (39.16415588210547, 85.52325689092415)
    (48.71638902408241, 106.35745047635052)
    (58.26862216605935, 127.19164006790103)
    (67.8208553080363, 148.02582735195384)
    (77.37308845001324, 168.8600131826156)
    (86.92532159199018, 189.69419803876627)
    };
    \addlegendentry{\Cref{theorem:max-score}}
    
    \addplot[mark=*, color=blue] coordinates {
    (1.4932726172912962, 0.9518452535783978)
    (2.3175071876617723, 2.143464798423721)
    (3.7284097714940327, 4.305360099669469)
    (6.194719440190005, 8.368574625386163)
    (10.584062043356596, 16.10832729937952)
    (18.516012455502484, 30.96496613939242)
    (33.03611294612949, 59.48808048346394)
    (59.90825966255753, 107.7233591452453)
    (110.10243283848793, 161.3056382652037)
    };
    \addlegendentry{\citep{PT22}}
    
    \end{axis}
\end{tikzpicture}
        \caption{
            A plot of the standard deviation vs expectation of the number of 
            invocations of a mechanism by the algorithms
            from Corollary 3 in~\citep{PT22} for $\eta = \eps' / \eps$ and 
            \Cref{theorem:max-score} for $\cE = \Geo{e^{-\eps'}}$ for $\eps = 0.1$ 
            and $\eps' = 0.01$.
        }
        \label{fig:stddev-vs-exp}
    \end{figure}

    In addition, such a repetition trick allows us to prove a utility lower bound which achieves a ``boosting'' effect. To set a stage of the formal statement, note that we will give a relatively weak assumption that at least one of the mechanisms $\mechanism_{i^*}$ outputs a ``good'' candidate with a small probability $\alpha$. The theorem below states that, by repeating each mechanism $\mechanism_i$ a certain number of times $T_i$, we can ensure that \Cref{alg:maximum-selection} outputs a ``good'' candidate with probability at least $1 - \beta$ (where $\beta$ is a small number). We formalize this below, where ``good'' candidates are those that are at least $o^*$. Note also that $T_i$ is an upper bound on the number of times the mechanism $\cM_i$ is run.\footnote{In machine learning settings, the score itself is computed as a measure of performance on a {\em test} set, as a proxy for the measure of performance on the {\em population distribution}. When running more mechanisms, one would need a larger test set in order to get good generalization. This is orthogonal to \Cref{theorem:goodness-guarantee}, which in this setting would refer to $o^*$ as the performance on the population distribution.}\pritish{If we are computing $o$ as the performance (e.g.,  accuracy) on a test set, we need a slightly larger test set size to counter balance this larger variance in the number of times the mechanism is run. Not sure if this is worth mentioning...}\pasin{This is a good point. I'm actually not sure how to mention this without diving too deep into how the score is computed etc...}\pritish{I added a footnote. Can you PTAL? Feel free to rephrase.}
 
    \begin{theorem}\label{theorem:goodness-guarantee}
    Let $\alpha, \beta \in (0, 1)$ and let $T_i = \left\lceil \frac{1}{\alpha} \left(\frac{2}{\beta}\right)^{\eps_i/\eps'} \cdot \ln\left(\frac{2}{\beta}\right) \right\rceil$. Consider \Cref{alg:maximum-selection} with $\cE = \Geo{e^{-\eps'}}$ where, for all $i \in [d]$, we repeat $\mechanism_i$ for $T_i$ times in the input parameter sequence.
    If there exists $o^* \in \cO$ and $i^* \in [d]$ such that $\Pr[\mechanism_{i^*}(D) \geq o^*] \geq \alpha$, then \Cref{alg:maximum-selection} outputs an element that is larger than $(o^*, 0)$ with probability at least $1 - \beta$.
    \end{theorem}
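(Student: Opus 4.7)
The plan is to reduce the theorem to showing that, with probability at least $1-\beta$, at least one of the $T_{i^*}$ independent slots corresponding to $\mechanism_{i^*}$ both survives the random drop in \Cref{alg:maximum-selection} and produces an output $\geq o^*$. Whenever this happens, such a pair $(o, i^*)$ with $o \geq o^*$ enters $S$, and the returned maximum of $S$ therefore exceeds $(o^*, 0)$ in the lexicographic order (since $i^* \geq 1 > 0$). So I only need to upper bound the probability of the complementary ``failure'' event by $\beta$.

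First I would condition on the correlated noise $k \sim \Geo{e^{-\eps'}}$. For any one of the $T_{i^*}$ slots devoted to $\mechanism_{i^*}$, the slot is kept with probability $e^{-\eps_{i^*} k}$, and, given it is kept, the output is $\geq o^*$ with probability at least $\alpha$ by hypothesis. Since the $T_{i^*}$ slots are independent, the conditional failure probability satisfies
\[
  (1 - \alpha\, e^{-\eps_{i^*} k})^{T_{i^*}} \;\leq\; \exp\bigl(-T_{i^*}\,\alpha\, e^{-\eps_{i^*} k}\bigr).
\]
Next I would split on whether $k$ is large. The geometric tail bound gives $\Pr[k \geq K] = e^{-\eps' K}$, so setting $K := \ln(2/\beta)/\eps'$ yields $\Pr[k > K] \leq \beta/2$. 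On the event $\{k \leq K\}$ we have $e^{\eps_{i^*} k} \leq e^{\eps_{i^*} K} = (2/\beta)^{\eps_{i^*}/\eps'}$, and the choice $T_{i^*} \geq \frac{1}{\alpha}(2/\beta)^{\eps_{i^*}/\eps'} \ln(2/\beta)$ makes the exponent above at least $\ln(2/\beta)$, bounding the conditional failure by $\beta/2$. A union bound over these two contributions gives total failure probability at most $\beta$.

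There is no serious obstacle; the only step that requires care, rather than difficulty, is calibrating the two ``budgets'' of $\beta/2$ so that the geometric tail for $k$ and the Bernoulli-type failure over the $T_{i^*}$ slots balance cleanly. The factor $(2/\beta)^{\eps_{i^*}/\eps'}$ in the definition of $T_{i^*}$ is exactly the multiplicative penalty incurred by allowing $k$ to be as large as $K = \ln(2/\beta)/\eps'$ while keeping both pieces of the union bound below $\beta/2$, which is why the exponent $\eps_{i^*}/\eps'$ appears.
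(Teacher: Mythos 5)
Your proof is correct and takes essentially the same route as the paper: condition on the correlated noise $k$, bound the conditional failure over the $T_{i^*}$ independent copies of $\mechanism_{i^*}$ by $\exp(-\alpha e^{-\eps_{i^*}k}T_{i^*})\leq \beta/2$ when $k\leq \ln(2/\beta)/\eps'$, and add the geometric tail $\Pr[k>\ln(2/\beta)/\eps']\leq\beta/2$. Your extra remark that a surviving output $o\geq o^*$ from $\mechanism_{i^*}$ dominates $(o^*,0)$ lexicographically is a correct (and implicit in the paper) detail; no gaps.
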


    \begin{proof}
    %
     If the final output is \emph{smaller} than $(o^*, i^*)$, then in all runs of $\mechanism_{i^*}$, it either has to be dropped or the output is less than $o^*$ (or both). For a fixed value of $k$, this happens with probability at most $(1 - e^{-\eps_{i^*} \cdot k} \cdot \alpha)^{T_{i^*}} \leq \exp\left(-e^{-\eps_{i^*} \cdot k} \cdot \alpha \cdot T_{i^*}\right)$. Due to our choice of $T_{i^*}$, this is at most $\beta/2$ for $k \leq \ln\left(2/\beta\right) / \eps'$. Thus, the probability that the final output is smaller than $(o^*, i^*)$ is at most $\beta/2 + \Pr\left[k > \ln\left(2/\beta\right) / \eps'\right] \leq \beta$. 
    \end{proof}

    Finally, we remark that, even in the ex-ante setting with all $\eps$'s being equal, \cite{LT19} showed that the additional factor of 2 in the privacy budget is necessary even under a very weak assumption on the utility. This gives a strong evidence that our algorithm (in its generic form) requires such a factor of 2 blow-up as well.
    

    \subsection{R\'{e}nyi DP}

  In this section, we consider the setting where each $\mechanism_i$ satisfies R\'enyi DP (RDP) instead of pure-DP. 
    As alluded to earlier, many popular DP machine learning algorithms, including DP-SGD \citep{AbadiCGMMT016}, do not satisfy pure-DP but are amenable to privacy analysis using RDP. By changing the distribution of $\cE$ from the Geometric distribution (in the pure-DP case)
    to the Exponential distribution, we can show a version of \Cref{theorem:max-score} for RDP.
    \begin{theorem}[Ex-post RDP]
        \label{theorem:max-score-rdp}
        Let $\ell_1, \dots, \ell_d \ge 0$ and let us assume that each $\mechanism_i$ is
        $(\alpha, \eps_i)$-RDP with
        the output set $\resultSet \times \R$. Let $\tau_i$ be the expected 
        number of times $\mechanism_i$ is executed in  \Cref{alg:maximum-selection} 
        \sasha{I added the reference to  \Cref{alg:maximum-selection}, PTAL}
        (note that it is independent of the dataset); and let $\tau = \sum_{i = 1}^d \tau_i$.
        
        Define a function $\teps$ such that 
        $\teps (o, i) = (2 + \ell_i) \eps_i + (1 + \ell_i) \eps' + 
            \frac{\log(\tau + 1) + \sum_{j \neq i} e^{-\eps_j (1 + \alpha\ell_i)}}{\alpha - 1}$ and 
        $\teps(\perp) = \frac{\log(\tau + 1)}{\alpha - 1}$.
        Then, \Cref{alg:maximum-selection} with $\cE = \Exp{\eps'}$ is 
        $(\alpha, \teps)$-RDP.
    \end{theorem}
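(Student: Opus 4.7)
My plan is to adapt the pure-DP proof of \Cref{theorem:max-score} to the R\'enyi setting, replacing pointwise likelihood-ratio bounds with $\alpha$-th moment bounds, and replacing the Geometric shift identity with the analogous identity for the Exponential. The ex-post R\'enyi condition $\sum_o e^{-(\alpha-1)\teps(o)} \cA(o)^\alpha/\cA'(o)^{\alpha-1} \le 1$ will be verified by allocating a budget of $\frac{1}{\tau+1}$ to $\perp$ and $\frac{\tau_i}{\tau+1}$ to each index $i$; since $1+\sum_i \tau_i = \tau+1$ these budgets telescope to $1$.

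For the $\perp$ term, the event $S = \{\perp\}$ requires $y_j = 0$ for every $j$, which depends only on $k$ and the independent Bernoullis and not on the dataset; hence $\cA(\perp) = \cA'(\perp)$ and $\cA(\perp)^\alpha/\cA'(\perp)^{\alpha-1} = \cA(\perp) \le 1$, matching $\teps(\perp) = \log(\tau+1)/(\alpha-1)$. For each $i \in [d]$, writing $\cA(o,i) = \int_0^\infty \cE(k)\cA_k(o,i)\,dk$ with $\cA_k(o,i) = e^{-\eps_i k}\cQ_i(o)\prod_{j \neq i}(1-e^{-\eps_j k}\cQ_j(U^j_{o,i}))$ (the same integrand form as in the pure-DP proof), the Exponential shift identity $\cE(k) = e^{-\eps'(1+\ell_i)}\cE(k-(1+\ell_i))$ gives $\cA'(o,i) \ge e^{-\eps'(1+\ell_i)}\E_{k\sim\cE}[\cA'_{k+1+\ell_i}(o,i)]$. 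Combined with joint convexity of $(x,y)\mapsto x^\alpha y^{1-\alpha}$ (Jensen over $k$), I obtain
\begin{align*}
\sum_o \frac{\cA(o,i)^\alpha}{\cA'(o,i)^{\alpha-1}} \le e^{(\alpha-1)\eps'(1+\ell_i)}\cdot \E_{k\sim\cE}\!\left[\sum_o \frac{\cA_k(o,i)^\alpha}{\cA'_{k+1+\ell_i}(o,i)^{\alpha-1}}\right].
\end{align*}
Expanding the inner ratio separates it into the prefactor $e^{-\eps_i k + (\alpha-1)\eps_i(1+\ell_i)}$, the ``main'' factor $\cQ_i(o)^\alpha/\cQ'_i(o)^{\alpha-1}$ (whose sum over $o$ is bounded by $e^{(\alpha-1)\eps_i}$ via $(\alpha, \eps_i)$-RDP of $\mechanism_i$, accounting together with the prefactor for the $(2+\ell_i)\eps_i$ term in $\teps$), and a product over $j \neq i$ of complement ratios; integrating the residual $e^{-\eps_i k}$ against $\cE = \Exp{\eps'}$ produces the required $\tau_i = \eps'/(\eps'+\eps_i)$ normalization.

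The main obstacle is bounding the complement product $\prod_{j\neq i}(1-e^{-\eps_j k}\cQ_j(U^j_{o,i}))^\alpha/(1-e^{-\eps_j(k+1+\ell_i)}\cQ'_j(U^j_{o,i}))^{\alpha-1}$ uniformly in $o$ in such a way that, after integration against $\cE$, the residual contribution matches exactly $\exp(\sum_{j\neq i}e^{-\eps_j(1+\alpha\ell_i)})$. The appearance of $\alpha\ell_i$ rather than $\ell_i$ strongly suggests a H\"older-style argument that exploits the asymmetry of the $\alpha$-th and $(\alpha-1)$-th powers on the numerator and shifted denominator, effectively amplifying the $\ell_i$ shift in the denominator by a factor of $\alpha$. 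Once this complement bound is in hand, summing over $i$ and combining with the $\perp$ contribution gives total $\le (1+\sum_i \tau_i)/(\tau+1) = 1$, establishing ex-post $(\alpha,\teps)$-RDP.
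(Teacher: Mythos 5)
Your outline matches the paper's proof in all of its structural choices: the budget split of $\frac{1}{\tau+1}$ for $\perp$ and $\frac{\tau_i}{\tau+1}$ for each $i$, the data-independence argument giving $\cA(\perp)=\cA'(\perp)$, the shift of the exponential variable by $1+\ell_i$ (the paper implements your ``shift identity'' by truncating the integral defining $\cA'(o,i)$ to $x\ge 1+\ell_i$ and changing variables), the use of joint convexity of $(x,y)\mapsto x^\alpha y^{1-\alpha}$ to pull the ratio inside the expectation over $k$ (this is exactly \Cref{lem:linear-comb-rdp}), and the accounting $\tau_i=\eps'/(\eps'+\eps_i)$ together with $\sum_o (\cQ_i(o))^\alpha(\cQ'_i(o))^{1-\alpha}\le e^{(\alpha-1)\eps_i}$ for the $(2+\ell_i)\eps_i$ term.

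However, the step you defer as ``the main obstacle'' is the actual crux of the theorem, and you have not supplied it; you only conjecture that some H\"older-style manipulation amplifies the shift to $\alpha\ell_i$. In the paper this is \Cref{lem:rdp-to-conv-ratio}: if $a,b\in(0,1)$ satisfy $a^{1-\alpha}b^{\alpha}\le e^{\eps(\alpha-1)}$, then $(1-a)^\alpha\bigl(1-e^{-\eps(1+\ell)}b\bigr)^{1-\alpha}\le \exp\bigl(e^{-\eps(1+\alpha\ell)}\bigr)$. It is applied per coordinate $j\ne i$ with $a=e^{-\eps_j k}\cQ_j(U^j_{o,i})$ and $b=e^{-\eps_j k}\cQ'_j(U^j_{o,i})$, where the hypothesis comes from the $(\alpha,\eps_j)$-RDP of the post-processing of $\mechanism_j$ that only reports whether its output lies in $U^j_{o,i}$ (applied in the direction $D'$ versus $D$), and crucially the conclusion is uniform in $k$ and $o$, which is what allows it to be pulled out of the expectation. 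This bound is not obtainable from H\"older or convexity applied to the two complements alone: without the constraint tying $a$ to $b$ it is simply false (take $a\to 0$, $b$ near $1$, and $\eps$ small; the left side blows up like $(1-e^{-\eps(1+\ell)})^{1-\alpha}$ while the right side stays below $e$), and a factor-wise use of $1+x\le e^x$ on the shifted complement goes the wrong way because of the negative exponent $1-\alpha$. The paper's argument instead rewrites the constraint as $a\ge e^{-\eps}b^{\alpha/(\alpha-1)}$, reduces the problem to bounding $(\alpha-1)e^{-\eps(1+\ell)}b-\alpha a$, and maximizes over $b$ via a weighted AM--GM step; it is exactly this optimization that produces the exponent $1+\alpha\ell$ in place of $1+\ell$. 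Until you prove a complement bound of this kind (with uniformity in $k$), your argument for the $(o,i)$ terms, and hence the theorem, is incomplete.
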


    Before we prove \Cref{theorem:max-score-rdp}, let us give a high-level overview.
    Recall that in the above proof of \Cref{theorem:max-score} we use the inequality 
    $\left(1 - e^{-\eps_j k} \cQ_j(U^j_{o, i})\right) \leq \left(1 - e^{-\eps_j (k + 1)} \cQ'_j(U^j_{o, i})\right)$ which follows from the assumption that $\mechanism_j$ is $\eps_j$-DP. 
    The main challenge in proving an RDP bound is that such an inequality fails since  the assumption that $\mechanism_j$ is $(\alpha, \eps_j)$-RDP is weaker. 
    To tackle this, we change the coupling: instead of coupling $k$ with $k + 1$, we couple $k$ with $k + 1 + \ell_i$. 
    Note that, since we allow $\ell_i$ to be non-integer (e.g., a value below one), this step necessitates the use of the Exponential distribution instead of the Geometric distribution. 
    This new coupling allows us to instead compare $\left(1 - e^{-\eps_j k} \cQ_j(U^j_{o, i})\right)$ with $\left(1 - e^{-\eps_j (k + 1 + \ell_i)} \cQ'_j(U^j_{o, i})\right)$. 
    Alas, the former is still not necessarily smaller than the latter.  Nevertheless, via a careful argument, we can bound the ratio of these two quantities. 
    Such a bound then ends up as the last term $\frac{\log(\tau + 1) + \sum_{j \neq i} e^{-\eps_j (1 + \alpha\ell_i)}}{\alpha - 1}$ in our RDP guarantee in \Cref{theorem:max-score-rdp}. 

    To prove \Cref{theorem:max-score-rdp}, we start by collecting some useful facts. The first is the following inequality which is sometimes called the ``reverse H\"older's inequality''; we provide the proof for completeness.
    
    \begin{lemma}
    \label{lem:linear-comb-rdp}
        Let $X$ be a random variable and $f, g$ be any functions on $X$. Then, for any $\alpha > 1$, we have
        \[
            \E[f(X)]^{\alpha}\E[g(X)]^{1-\alpha} \leq \E[f(X)^{\alpha}g(X)^{1-\alpha}].
        \]
    \end{lemma}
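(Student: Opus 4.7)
The statement is a standard \emph{reverse H\"older} style inequality, and the plan is to derive it as a direct consequence of classical H\"older's inequality applied to a cleverly chosen factorization. The key observation is that if we set $h(X) := f(X)^{\alpha} g(X)^{1-\alpha}$, then we can recover $f(X)$ itself as the product
\[
f(X) \;=\; h(X)^{1/\alpha} \cdot g(X)^{(\alpha - 1)/\alpha}.
\]
This identity is what makes the argument work: the right-hand side of the inequality we want to prove is already hiding inside $f(X)$.

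Next, I would apply the classical H\"older inequality to this product with the conjugate exponents $p = \alpha$ and $q = \alpha/(\alpha - 1)$, which are valid since $\alpha > 1$ and satisfy $1/p + 1/q = 1$. This yields
\[
\E[f(X)] \;=\; \E\bigl[h(X)^{1/\alpha} \cdot g(X)^{(\alpha-1)/\alpha}\bigr] \;\leq\; \E[h(X)]^{1/\alpha} \cdot \E[g(X)]^{(\alpha-1)/\alpha}.
\]
Raising both sides to the $\alpha$-th power and rearranging, i.e., multiplying by $\E[g(X)]^{1-\alpha}$, gives exactly
\[
\E[f(X)]^{\alpha} \E[g(X)]^{1-\alpha} \;\leq\; \E[h(X)] \;=\; \E[f(X)^{\alpha} g(X)^{1-\alpha}],
\]
as desired.

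The only subtlety worth flagging is well-definedness: we are implicitly assuming $f, g \geq 0$ (or at least that the relevant expectations and fractional powers are well-defined, which will be the case in the intended application where $f$ and $g$ are probability mass/density functions). If $\E[g(X)] = 0$ the factor $\E[g(X)]^{1-\alpha}$ is formally $+\infty$, but in that case $g \equiv 0$ almost surely, and the claim can be treated as a degenerate limit, so this is not a real obstacle. I would include a one-line remark handling this case and then present the two-line H\"older computation above. There is no conceptual difficulty; the only ``trick'' is spotting the factorization $f = h^{1/\alpha} g^{(\alpha-1)/\alpha}$.
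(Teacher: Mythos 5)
Your proof is correct and is essentially identical to the paper's own argument: both apply H\"older's inequality with conjugate exponents $\alpha$ and $\alpha/(\alpha-1)$ to the factorization $f = \bigl(f^{\alpha}g^{1-\alpha}\bigr)^{1/\alpha} g^{(\alpha-1)/\alpha}$ and rearrange. Your write-up just makes the factorization and the nonnegativity caveat explicit, which the paper leaves implicit.
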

    \begin{proof}
        By H\"older's inequality, we have
        \[
          \E[f(X)^{\alpha}g(X)^{1-\alpha}]^{\frac{1}{\alpha}} 
          \E[g(X)]^{\frac{\alpha - 1}{\alpha}} \geq \E[f(X)].
        \]
        Rearranging this yields the claimed inequality.
    \end{proof}
    
    \begin{lemma} \label{lem:rdp-to-conv-ratio}
        For all $\eps > 0$, $\alpha > 1$, if $a, b \in (0, 1)$ are such that 
        $a^{1 - \alpha} b^{\alpha} \leq e^{\eps (\alpha - 1)}$, then, for all $\ell > 0$,
        \[
            (1 - a)^\alpha \left(1 - e^{-\eps (1 + \ell)}b\right)^{1 - \alpha} 
            \leq 
            \exp\left(e^{-\eps (1 + \alpha\ell)} 
            \right).
        \]
    \end{lemma}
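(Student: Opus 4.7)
The plan is to reduce the target to an inequality in the single variable $a$, and then close it via the bounds $1-x \leq e^{-x}$, $1+z \leq e^{z}$, and a single application of weighted AM--GM.

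First I would observe that the LHS of the target inequality is strictly increasing in $b$, since its derivative in $b$ is $(\alpha-1) e^{-\eps(1+\ell)} (1-a)^\alpha (1-e^{-\eps(1+\ell)} b)^{-\alpha} > 0$. The hypothesis $a^{1-\alpha}b^\alpha \leq e^{\eps(\alpha-1)}$ rearranges to $b \leq e^{\eps \gamma} a^\gamma$ where $\gamma := (\alpha-1)/\alpha$, so I would substitute this worst-case value of $b$; the cutoff $b \leq 1$ takes over only when $a \geq e^{-\eps}$, but in that regime the LHS is itself decreasing in $a$ and matches the previous expression at $a = e^{-\eps}$, so this case reduces to the one above. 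Setting $c := e^{-\eps(\ell + 1/\alpha)}$ so that $c^\alpha = e^{-\eps(1+\alpha\ell)}$, the goal becomes
\[
(1-a)^\alpha (1 - c a^\gamma)^{1-\alpha} \leq \exp(c^\alpha) \qquad \text{for all } a \in (0, 1).
\]

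Next I would rewrite the LHS as $(1-a) \cdot (1+z)^{\alpha-1}$, where $z := (ca^\gamma - a)/(1 - ca^\gamma) > -1$ (the bound $z > -1$ uses $a < 1$). Applying $1-a \leq e^{-a}$ and $(1+z)^{\alpha-1} \leq e^{(\alpha-1) z}$ (the latter from $\log(1+z) \leq z$ and $\alpha - 1 > 0$), it suffices to prove that $-a + (\alpha-1) z \leq c^\alpha$. Multiplying through by the positive quantity $1 - c a^\gamma$ and simplifying, this is equivalent to the inequality
\[
c a^\gamma \cdot (\alpha - 1 + a + c^\alpha) \leq c^\alpha + \alpha a.
\]

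The closing step uses weighted AM--GM with weights $\gamma$ and $1/\alpha$: $\gamma \cdot a + (1/\alpha) \cdot c^\alpha \geq a^\gamma \cdot (c^\alpha)^{1/\alpha} = c a^\gamma$, which rearranges to $(\alpha-1) a + c^\alpha \geq \alpha c a^\gamma$. Substituting this upper bound for $c a^\gamma$ in the displayed inequality above and expanding reduces the claim to
\[
(\alpha - 1) a^2 + \alpha a c^\alpha + c^{2\alpha} \leq (2\alpha - 1) a + c^\alpha,
\]
which holds term-by-term from $a^2 \leq a$, $a c^\alpha \leq a$, and $c^{2\alpha} \leq c^\alpha$ (all valid since $a, c^\alpha \in (0, 1)$). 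I expect the main obstacle to be identifying that weighted AM--GM is the right tool here: more direct approaches, such as bounding $-\log(1-ca^\gamma)$ by $ca^\gamma/(1-ca^\gamma)$, leave residual terms whose size degrades with $\alpha$, whereas the AM--GM reduction lands on a clean polynomial inequality that is completely absorbed by the trivial bounds above.
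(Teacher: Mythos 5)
Your proof is correct, and it takes a genuinely different route from the paper's, even though both arguments ultimately rest on the same two ingredients: the linearization $1+x\le e^x$ and weighted AM--GM with weights $\frac{\alpha-1}{\alpha}$ and $\frac{1}{\alpha}$. The paper applies $1+x\le e^x$ to the two factors separately, asserting $(1-a)^\alpha\bigl(1-e^{-\eps(1+\ell)}b\bigr)^{1-\alpha}\le \exp\bigl((\alpha-1)e^{-\eps(1+\ell)}b-\alpha a\bigr)$, and only afterwards uses the hypothesis (in the form $a\ge e^{-\eps}b^{\alpha/(\alpha-1)}$) together with AM--GM to bound the resulting linear expression by $e^{-\eps(1+\alpha\ell)}$. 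You instead eliminate $b$ first by monotonicity, substituting the worst case $b=e^{\eps\gamma}a^{\gamma}$ with $\gamma=\frac{\alpha-1}{\alpha}$ (and handling the $b\le 1$ cutoff correctly), then rewrite the remaining quantity, with $c=e^{-\eps(\ell+1/\alpha)}$, as $(1-a)(1+z)^{\alpha-1}$ where $z=\frac{ca^{\gamma}-a}{1-ca^{\gamma}}$, linearize, and close with AM--GM plus term-by-term bounds on a polynomial inequality; I verified the algebra and it is right. A notable payoff of your ordering: the paper's first step bounds the factor carrying the negative exponent via $1-x\le e^{-x}$ raised to the power $1-\alpha<0$, which reverses the inequality, and that intermediate bound can in fact fail for admissible pairs satisfying the hypothesis (e.g.\ $\alpha=1.1$, $\eps=0.01$, $\ell=0.1$, $b=0.9$, $a=e^{-\eps}b^{\alpha/(\alpha-1)}\approx 0.311$ gives left side $\approx 0.83$ against the claimed intermediate bound $\approx 0.78$), even though the lemma's conclusion still holds there. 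Your rearrangement into $(1-a)(1+z)^{\alpha-1}$ keeps both exponential bounds pointing in the valid direction, so your argument is self-contained and repairs this gap in the paper's write-up; the price is the extra worst-case-$b$ reduction and a somewhat longer algebraic finish.
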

    \begin{proof}
        From $1 + x \leq e^x$ for all $x \in \R$, the LHS is at most $\exp((\alpha - 1) e^{-\eps (1 + \ell)}b - \alpha a)$. 
        It is thus sufficient to bound $(\alpha - 1) e^{-\eps (1 + \ell)}b - \alpha a$.
    
        To do this, observe that the condition $a^{1 - \alpha} b^{\alpha} \leq e^{\eps (\alpha - 1)}$ implies
        \begin{align} \label{eq:rdp-rearranged}
        a \geq e^{-\eps} \cdot b^{\frac{\alpha}{\alpha - 1}}.
        \end{align}
        Thus, we may bound the desired term as follows.
        \begin{align*}
            &(\alpha - 1) e^{-\eps (1 + \ell)}b - \alpha a \\
            &\overset{\eqref{eq:rdp-rearranged}}{\leq} (\alpha - 1) e^{-\eps (1 + \ell)}b - \alpha \cdot  e^{-\eps} \cdot b^{\frac{\alpha}{\alpha - 1}} \\
            &= e^{-\eps}\left((\alpha - 1)e^{-\eps\ell} - \alpha b^{\frac{1}{\alpha - 1}}\right) b \\
            &= e^{-\eps} \left(\frac{\alpha - 1}{\alpha}\right)^{\alpha - 1} \left(\left((\alpha - 1)e^{-\eps\ell} - \alpha b^{\frac{1}{\alpha - 1}}\right)^1 \left(\frac{\alpha}{\alpha - 1} \cdot b^{\frac{1}{\alpha - 1}}\right)^{\alpha - 1}\right) \\
            &\overset{(\star)}{\leq} e^{-\eps} \left(\frac{\alpha - 1}{\alpha}\right)^{\alpha - 1} \left(\frac{\left((\alpha - 1)e^{-\eps\ell} - \alpha b^{\frac{1}{\alpha - 1}}\right) + (\alpha - 1)\cdot\left(\frac{\alpha}{\alpha - 1} \cdot b^{\frac{1}{\alpha - 1}}\right)}{\alpha}\right)^\alpha \\
            &= e^{-\eps} \left(\frac{\alpha - 1}{\alpha}\right)^{\alpha - 1} \left(\frac{(\alpha - 1)e^{-\eps\ell}}{\alpha}\right)^\alpha \\
            &= e^{-\eps (1 + \alpha\ell)} \left(\frac{\alpha - 1}{\alpha}\right)^{2\alpha - 1} \\
            &\leq e^{-\eps (1 + \alpha\ell)},
        \end{align*}
        where we use the weighted AM--GM inequality for $(\star)$.
    \end{proof}
    
    \begin{proof}[Proof of \Cref{theorem:max-score-rdp}]
        We will use the same notations as in the proof of \Cref{theorem:max-score}.
        
        First, let us rearrange the term we wish to bound;
        \begin{align}
            &\sum_{\tilde{o} \in \resultSet \times [d] \cup \{\bot\}} 
                (\cA(\tilde{o}))^\alpha (\cA'(\tilde{o}))^{1 - \alpha} e^{(1 - \alpha)\teps(\tilde{o})} \nonumber \\
            &= (\cA(\perp))^\alpha (\cA'(\perp))^{1 - \alpha} e^{-(\alpha - 1)\teps(\perp)} + 
            \sum_{o \in \resultSet, i \in [d]} 
                (\cA(o, i))^\alpha (\cA'(o, i))^{1 - \alpha} e^{-(\alpha - 1)\teps(o, i)} \nonumber  \\
            &\leq \frac{1}{\tau + 1} + \sum_{o \in \resultSet, i \in [d]} 
                (\cA(o, i))^\alpha (\cA'(o, i))^{1 - \alpha} e^{-(\alpha - 1)\teps(o, i)}. 
            \label{eq:renyi-div-expanded}
        \end{align}
        We will now bound each term $(\cA(o, i))^\alpha (\cA'(o, i))^{1 - \alpha}$ above. 
        We have
        \begin{align*}
            \cA(o, i) &= \E_{x \sim \Exp{\eps'}} \left[
                    e^{-\eps_i \cdot x} \cQ_i(o) 
                        \prod_{j \in [d] \setminus \{i\}} (1 - e^{-\eps_j \cdot x} \cQ_j(U^j_{o, i}))
                \right] \\
            &= \cQ_i(o) \cdot \E_{x \sim \Exp{\eps'}}\left[ 
                e^{-\eps_i \cdot x} \prod_{j \in [d] \setminus \{i\}} (1 - e^{-\eps_j \cdot x} \cQ_j(U^j_{o, i})) \right],
        \end{align*}
        and
        \begin{align*}
        &\cA'(o, i) \\ &= \E_{x \sim \Exp{\eps'}} \left[
                    e^{-\eps_i \cdot x} \cQ'_i(o) 
                        \prod_{j \in [d] \setminus \{i\}} (1 - e^{-\eps_j \cdot x} \cQ'_j(U^j_{o, i}))
                \right] \\
        &= \int_0^{\infty} \eps' e^{-\eps' x} \cdot \left(
                e^{-\eps_i \cdot x} \cQ'_i(o)
            \right) \prod_{j \in [d] \setminus \{i\}} (1 - e^{-\eps_j \cdot x} \cQ'_j(U^j_{o, i})) \,dx \\
        &\geq \int_{1+\ell_i}^{\infty} \eps' e^{-\eps' x} \cdot \left(
                e^{-\eps_i \cdot x} \cQ'_i(o)
            \right) \prod_{j \in [d] \setminus \{i\}} (1 - e^{-\eps_j \cdot x} \cQ'_j(U^j_{o, i})) \,dx \\
        &= e^{-(1 + \ell_i)(\eps' + \eps_i)} \cQ'_i(o) \cdot 
            \int_{0}^{\infty} \eps' e^{-\eps' x} \cdot e^{-\eps_i \cdot x} 
                \prod_{j \in [d] \setminus \{i\}} (1 - e^{-\eps_j(1 + \ell_i)} \cdot e^{-\eps_j \cdot x} \cQ'_j(U^j_{o, i})) \,dx \\
        &= e^{-(1 + \ell_i)(\eps' + \eps_i)} \cQ'_i(o) \cdot 
            \E_{x \sim \Exp{\eps'}}\left[
                e^{-\eps_i \cdot x} 
                \prod_{j \in [d] \setminus \{i\}} (1 - e^{-\eps_j(1 + \ell_i)} \cdot e^{-\eps_j \cdot x} \cQ'_j(U^j_{o, i}))
            \right],
        \end{align*}
        where the integrals are due to exponential random variables $x$.
    
        Combining these two inequalities together with \Cref{lem:linear-comb-rdp}, we get that
        \begin{align*}
            &(\cA(o, i))^{\alpha} (\cA'(o, i))^{1 - \alpha} \\
            &\leq e^{(\alpha - 1)(1 + \ell_i)(\eps' + \eps_i)} (\cQ_i(o))^{\alpha} (\cQ_i'(o))^{1 - \alpha} \\
            &\qquad \cdot \E_{x \sim \Exp{\eps'}}\left[
                e^{-\eps_i \cdot x} \cdot 
                \prod_{j \in [d] \setminus \{i\}}
                    (1 - e^{-\eps_i \cdot x} \cQ_j(U^j_{o, i}))^{\alpha} 
                    (1 - e^{-\eps_i(1 + \ell_i)} \cdot e^{-\eps_i \cdot x} \cQ'_j(U^j_{o, i}))^{1 - \alpha} \right].
        \end{align*}
        To bound the inner term, first consider a post-processing of mechanism $M_j$ where, 
        after running, we only output whether the score is greater than $s_i$. 
        Since this is a post-processing of $M_j$, this mechanism is also $(\alpha, \eps_j)$-RDP. 
        As such, we have $(\cQ_j(U^j_{o, i}))^{1 - \alpha} (\cQ'_j(U^j_{o, i}))^{\alpha} \leq e^{\eps_j (\alpha - 1)}$. 
        Thus, we may apply \Cref{lem:rdp-to-conv-ratio} to conclude that  
        \begin{align*}
            (1 - e^{-\eps_i \cdot x} \cQ_j(U^j_{o, i}))^{\alpha} 
                (1 - e^{-\eps_i(1 + \ell_i)} \cdot e^{-\eps_i \cdot x} \cQ'_j(U^j_{o, i}))^{1 - \alpha}
            \leq \exp\left(e^{-\eps_j(1 + \alpha\ell_i)}
            \right).
        \end{align*}
        Plugging this into the above, we arrive at
        \begin{align*}
            &(\cA(o, i))^{\alpha} (\cA'(o, i))^{1 - \alpha} \\
            &\leq e^{(\alpha - 1)(1 + \ell)(\eps' + \eps_i)}
                (\cQ_i(o))^{\alpha} (\cQ'_i(o))^{1 - \alpha} \cdot 
                    \exp\left(\sum_{j \in [d] \setminus \{i\}} e^{-\eps_j(1 + \alpha\ell_i)}\right) 
                    \E_{x \sim \Exp{\eps'}}\left[e^{-\eps_i \cdot x}\right]\\
            &= e^{(\alpha - 1)(1 + \ell)(\eps' + \eps_i)}
                    (\cQ_i(o))^{\alpha} (\cQ'_i(o))^{1 - \alpha} \cdot 
                        \exp\left(\sum_{j \in [d] \setminus \{i\}} e^{-\eps_j(1 + \alpha\ell_i)}\right) 
                        \tau_i \\
            &\leq \frac{\tau_i}{\tau + 1} \cdot
                e^{(\alpha - 1)(\teps(o, i) - \eps_i)} (\cQ_i(o))^{\alpha} (\cQ'_i(o))^{1 - \alpha},
        \end{align*}
        where in the last inequality we use our choice of $\ell_i$ and $\teps(o, i)$.
    
        Combining with \eqref{eq:renyi-div-expanded}, we thus get
        \begin{align*}
            &\sum_{\tilde{o} \in \resultSet \times [d] \cup \{\bot\}} (\cA(\tilde{o}))^{\alpha} (\cA'(\tilde{o}))^{1 - \alpha} e^{(1 - \alpha)\teps(\tilde{o})} \\ 
            &\leq \frac{1}{\tau + 1} + 
                \sum_{i \in [d]} \,
                    \sum_{o \in \resultSet} 
                        \frac{\tau_i}{\tau + 1} e^{-(\alpha - 1) \eps_i} 
                            \frac{(\cQ_i(o))^{\alpha}}{(\cQ'_i(o))^{\alpha - 1}} \\
            &= \frac{1}{\tau + 1} + 
                \sum_{i \in [d]} \,
                    \frac{\tau_i}{\tau + 1} e^{-(\alpha - 1) \eps_i} 
                    \left(
                        \sum_{o \in \resultSet} 
                            \frac{(\cQ_i(o))^{\alpha}}{(\cQ'_i(o))^{\alpha - 1}} 
                    \right) \\
            &\le \frac{1}{\tau + 1} + \sum_{i \in [d]} \frac{\tau_i}{\tau + 1} \cdot e^{-(\alpha - 1)\eps_i} e^{(\alpha - 1) \eps_i} \\
            &= \frac{1}{\tau + 1} + \sum_{i \in [d]} \frac{\tau_i}{\tau + 1} = 1.
            \qedhere
        \end{align*}
    \end{proof}

    We note that, unlike \Cref{theorem:max-score}, $\teps(\perp) \ne 0$ in \Cref{theorem:max-score-rdp}, i.e., we pay a privacy budget even when we fail to output anything meaningful.
    Again, this can be mitigated by repeating each mechanism multiple times in the input to decrease the probability of outputting $\perp$ to be arbitrarily small.

    When the $\eps_i$'s are different, it might be beneficial to pick $\ell_i$'s to be different as well. 
    On the other hand, if we only consider the simple setting when $\eps_1 = \dots = \eps_d = \eps$ and we wish to choose $\ell_1, \dots, \ell_d$ to all be equal to $\ell$. 
    Then, it is not hard to verify that by setting $\ell = O\left(\frac{\log d}{\eps \alpha}\right)$, we can ensure that $\sum_{j \in [d]} e^{-\eps_j (1 + \alpha\ell)} \leq 1$. 
    With this setting of parameters and assuming $\eps' \leq O(\eps)$, we thus have the RDP bound of $\teps = 2\eps + \eps' + O\left(\frac{\log d}{\alpha}\right)$. 
    Note that this is similar to the bound from state-of-the-art (ex-ante) RDP hyperparameter tuning from \citep[Theorem 2]{PT22}, 
    which gives an RDP bound of $(2 + \eta)\eps + O\left(\frac{\log d}{\lambda}\right)$, where $\eta$ is the parameter of the negative binomial distribution (and assuming $\gamma \in (0, 1)$ is a constant and $\hat{\lambda} = \lambda, \hat{\eps} = \eps$).

    Alternatively, one may notice that $\teps (o, i)$ doesn't depend on $\ell_j$ for $j \neq i$;
    hence, for each $i$ it is possible to choose $\ell_i$ as a value minimizing $\teps (o, i)$.
    
    \section{Fully-Adaptive Composition with Ex-Post R\'enyi DP}

    Real-life applications of DP mechanism are often highly interactive: 
    i.e., the analyst queries private data and based on the results of these queries decides what to query next. 
    Moreover, often it is important to be able to choose further privacy parameters based on previous responses. 
    Following \cite{RVRU16}, we express this interactivity in a form of a ``game'' between an adversary $\adversary$
    and some system $\mathcal{F}_{\alpha, \eps}$.
    In this interaction there is an unknown bit that the adversary wishes to learn; 
    on each step $i$ the adversary (based on previous responses)
    chooses two datasets $\dataset^{(0)}_i$ and $\dataset^{(1)}_i$, a privacy loss function $\teps_i$, and a mechanism
    $\mechanism_i$ that is $(\alpha, \teps_i)$-RDP; the system
    decides if such request could be answered;
    and if the system allows to proceed, the result $\mechanism_i(D^{(b)}_i)$ is given to the adversary. 
    The \emph{privacy filter} we devise is simple: Start with a total RDP budget $\eps$, 
    subtract from it the ex-post RDP bound after each request is answered, and only allow the next request to be answered
    if the remaining budget is at least the maximum possible ex-post RDP bound of the mechanism. 
    See \Cref{alg:rdp-filter} for the details.

    \begin{algorithm}
        \caption{Privacy filter for ex-post RDP.}
        \label{alg:rdp-filter}
        \begin{algorithmic}
            \PARAMETERS Order $\alpha > 1$,
            privacy budget $\eps > 0$,
            number of steps $n$.
            \REQUIRE Adversary $\adversary$,
            private bit $b \in \{0, 1\}$.

            \FOR{$i$ from $1$ to $n$}

                \STATE 
                    $\dataset^{(0)}_i, \dataset^{(1)}_i, 
                        \teps_i, \mechanism_i \gets \adversary(o_1, \dots, o_{i - 1})$

                \IF{$\sum_{j = 1}^{i - 1} 
                        \teps_j(o_j) + \sup_o \teps_i(o) > \eps$}
                    \RETURN $o_1$, \dots, $o_{i - 1}$
                \ELSE
                    \STATE $o_i \gets \mechanism_i(\dataset^{(b)}_i)$
                \ENDIF
            \ENDFOR
            \RETURN $o_1$, \dots, $o_n$
        \end{algorithmic}
    \end{algorithm}

    Our privacy filter allows us to use ex-post RDP algorithms in interactive manners while ensuring a final ex-ante RDP bound. 
    This result extends the results of \cite{Lcu21,FZ21} to allow adversary to issue mechanisms with ex-post guarantees.
    We note that such a connection between ex-post DP and ex-ante DP via privacy filter has been made before, e.g., \citep{LebensoldPB24, RVRU16},
    but only for pure-DP and approximate-DP.  
    To the best of our knowledge, this work is the first to generalize this result to RDP.

    \begin{theorem}
        \label{theorem:rdp-filter}
        For any adversary $\adversary$, 
        $\alpha > 1$, 
        $\eps > 0$, $n \in \N$, 
        \[
            \renyidiv{\alpha}{
                \IT{0}{\eps}{\alpha}{\adversary}
            }{
                \IT{1}{\eps}{\alpha}{\adversary}
            } \le \eps,
        \]
        where $\IT{b}{\eps}{\alpha}{\adversary}$ is the output of \Cref{alg:rdp-filter}.
    \end{theorem}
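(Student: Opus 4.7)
The plan is to reduce the claim to a ``privacy-loss-discounted'' R\'enyi sum and then exploit the ex-post RDP guarantee of each $\mechanism_i$ as a \emph{conditional} sub-unit inequality that telescopes across the $n$ rounds. The principal subtlety is that the filter may terminate early, so transcripts have variable length, and at each step the adversary picks both $\teps_i$ and $\mechanism_i$ adaptively based on $o_{<i}$. I would handle the former by padding early termination with a deterministic halt symbol, turning the process into a fixed-length one; the latter then works out because the ex-post RDP bound is a per-history inequality.

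Concretely, define $\tilde{\mechanism}_i$ on the extended output set $\resultSet \cup \{\diamond\}$ to output $\diamond$ deterministically whenever the gating condition $\sum_{j<i}\teps_j(o_j) + \sup_o \teps_i(o) > \eps$ holds on $o_{<i}$, and otherwise to run $\mechanism_i(\dataset^{(b)}_i)$ as in \Cref{alg:rdp-filter}; set $\tilde{\teps}_i(\diamond) = 0$ and $\tilde{\teps}_i(o) = \teps_i(o)$ otherwise. Let $\tilde{T}^{(b)}$ be the length-$n$ joint law of $(o_1,\dots,o_n)$ under this padded process. Since $\IT{b}{\eps}{\alpha}{\adversary}$ is obtained from $\tilde{T}^{(b)}$ by stripping trailing $\diamond$'s, the data-processing inequality for R\'enyi divergence reduces the theorem to $\renyidiv{\alpha}{\tilde{T}^{(0)}}{\tilde{T}^{(1)}} \le \eps$. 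For any history $o_{<i}$, the per-step distributions $\tilde{p}^{(b)}_i(\cdot\mid o_{<i})$ satisfy
\[
\sum_{o} \tilde{p}^{(0)}_i(o\mid o_{<i})^{\alpha}\, \tilde{p}^{(1)}_i(o\mid o_{<i})^{1-\alpha}\, e^{-(\alpha-1)\tilde{\teps}_i(o)} \le 1,
\]
because either both distributions equal $\delta_\diamond$ (in which case the LHS is exactly $1$) or they coincide with the ex-post $(\alpha, \teps_i)$-RDP mechanism $\mechanism_i$ on $(\dataset^{(0)}_i, \dataset^{(1)}_i)$.

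Now I would telescope. Let $L(\vec{o}) := \sum_j \tilde{\teps}_j(o_j)$; the chain rule $\tilde{T}^{(b)}(\vec{o}) = \prod_j \tilde{p}^{(b)}_j(o_j\mid o_{<j})$ gives
\[
\sum_{\vec{o}} \tilde{T}^{(0)}(\vec{o})^{\alpha}\, \tilde{T}^{(1)}(\vec{o})^{1-\alpha}\, e^{-(\alpha-1) L(\vec{o})} = \sum_{\vec{o}} \prod_{j=1}^n \tilde{p}^{(0)}_j(o_j\mid o_{<j})^{\alpha}\, \tilde{p}^{(1)}_j(o_j\mid o_{<j})^{1-\alpha}\, e^{-(\alpha-1)\tilde{\teps}_j(o_j)},
\]
and summing out $o_n, o_{n-1}, \dots, o_1$ in order while applying the conditional inequality above bounds this by $1$. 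Moreover, the gating condition guarantees $L(\vec{o}) \le \eps$ on the support of $\tilde{T}^{(b)}$: at the last active step $k$ (or $k = n$), $\sum_{j\le k}\teps_j(o_j) \le \sum_{j<k}\teps_j(o_j) + \sup_o\teps_k(o) \le \eps$, and each $\diamond$ contributes $0$. Multiplying through by $e^{(\alpha-1)\eps}$ yields $\sum_{\vec{o}} \tilde{T}^{(0)}(\vec{o})^{\alpha} \tilde{T}^{(1)}(\vec{o})^{1-\alpha} \le e^{(\alpha-1)\eps}$, which is exactly $\renyidiv{\alpha}{\tilde{T}^{(0)}}{\tilde{T}^{(1)}} \le \eps$. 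The main obstacle is cleanly packaging the variable-length, adaptive interaction; once the $\diamond$-padding reduces the process to a fixed-length one, the remaining manipulations are direct consequences of the chain rule and the ex-post RDP definition.
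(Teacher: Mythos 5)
Your proposal is correct and takes essentially the same route as the paper: the ex-post RDP guarantee is used as a per-history, conditional sub-unit inequality on the privacy-loss-discounted likelihood ratio, telescoped over the rounds by summing out $o_n,\dots,o_1$, with the gating condition ensuring the accumulated loss on any realized transcript is at most $\eps$. The only difference is in formalizing early termination: the paper assumes without loss of generality that the adversary only issues allowed queries, whereas you make this rigorous via padding with an (absorbing) halt symbol $\diamond$ and the data-processing inequality.
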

    \begin{proof}
        Without loss of generality, we can assume that the adversary is always issuing queries 
        such that $\sum_{j = 1}^{i - 1} \teps_{j}(o_j) + \sup_o \teps_{i}(o_i) \le \eps$
        for all $i \in [n]$.
        Let us denote the query issued by the adversary after seeing $o_1$,\dots, $o_{i - 1}$ as
        $D^{(0)}_{o_1, \dots, o_{i - 1}}$, $D^{(1)}_{o_1, \dots, o_{i - 1}}$, 
        $\teps_{o_1, \dots, o_{i - 1}}$, and 
        $\mechanism_{o_1, \dots, o_{i - 1}}$.
        Let us also denote the distribution of $\mechanism_{o_1, \dots, o_{i - 1}}(D^{b}_{o_1, \dots, o_{i - 1}})$
        by $P^{(b)}_{o_1, \dots, o_{i - 1}}$.
        Note that 
        \begin{align*}
            &e^{(\alpha - 1) 
                \renyidiv{\alpha}{
                    \IT{0}{\eps}{\alpha}{\adversary}
                }{
                    \IT{1}{\eps}{\alpha}{\adversary}
                }
            } \\
            &=
            \sum_{o_1, o_2, \dots, o_n} 
                \frac{
                    (
                        P^{(0)}(o_1) 
                        P^{(0)}_{o_1}(o_2)
                        \cdots
                        P^{(0)}_{o_1, \dots, o_{n - 1}}(o_n)
                    )^{\alpha}
                }{
                    (
                        P^{(1)}(o_1) 
                        P^{(1)}_{o_1}(o_2)
                        \cdots
                        P^{(1)}_{o_1, \dots, o_{n - 1}}(o_n)
                    )^{\alpha - 1}
                } \\
            &= 
            \sum_{o_1, o_2, \dots, o_{n - 1}} 
                \frac{
                    (
                        P^{(0)}(o_1) 
                        P^{(0)}_{o_1}(o_2)
                        \cdots
                        P^{(0)}_{o_1, \dots, o_{n - 2}}(o_{n - 1})
                    )^{\alpha}
                }{
                    (
                        P^{(1)}(o_1) 
                        P^{(1)}_{o_1}(o_2)
                        \cdots
                        P^{(1)}_{o_1, \dots, o_{n - 2}}(o_{n - 1})
                    )^{\alpha - 1}
                }
                \left(
                    \sum_{o_n}
                    \frac{
                        (
                            P^{(0)}_{o_1, \dots, o_{n - 1}}(o_n)
                        )^{\alpha}
                    }{
                        (
                            P^{(1)}_{o_1, \dots, o_{n - 1}}(o_n)
                        )^{\alpha - 1}
                    }
                \right).
        \end{align*}
        Further,  note that $\mechanism_{o_1, \dots, o_{n - 1}}$ is 
        $(\alpha, \teps_{o_1, \dots, o_{n - 1}})$-RDP and hence, 
        \begin{align*}
                \sum_{o_n} 
                \frac{
                    (
                        P^{(0)}_{o_1, \dots, o_{n - 1}}(o_n)
                    )^{\alpha}
                }{
                    (
                        e^{\teps_{o_1, \dots, o_{n - 1}}(o_n)}P^{(1)}_{o_1, \dots, o_{n - 1}}(o_n)
                    )^{\alpha - 1}
                } \le 1.
        \end{align*}
    
        Let $L(o_1, \dots, o_k) = \frac{
                    (
                        P^{(0)}(o_1) 
                        P^{(0)}_{o_1}(o_2)
                        \cdots
                        P^{(0)}_{o_1, \dots, o_{k - 1}}(o_k)
                    )^{\alpha}
                }{
                    (
                        e^{\teps(o_1)}
                        P^{(1)}(o_1) 
                        e^{\teps_{o_1}(o_2)}
                        P^{(1)}_{o_1}(o_2)
                        \cdots
                        e^{\teps_{o_1, \dots, o_{k - 1}}(o_k)}
                        P^{(1)}_{o_1, \dots, o_{k - 1}}(o_k)
                    )^{\alpha - 1}
                }.$
        Then
        \begin{align*}
            & \frac{
                e^{(\alpha - 1) 
                    \renyidiv{\alpha}{
                        \IT{0}{\eps}{\alpha}{\adversary}
                    }{
                        \IT{1}{\eps}{\alpha}{\adversary}
                    }
                }
            }{
                e^{(\alpha - 1) \eps}
            } \\
            & = 
            \frac{
                1
            }{
                e^{(\alpha - 1) \eps}
            } 
            \sum_{o_1, o_2, \dots, o_n}
                \frac{
                    (
                        P^{(0)}(o_1) 
                        P^{(0)}_{o_1}(o_2)
                        \cdots
                        P^{(0)}_{o_1, \dots, o_{n - 1}}(o_n)
                    )^{\alpha}
                }{
                    (
                        P^{(1)}(o_1) 
                        P^{(1)}_{o_1}(o_2)
                        \cdots
                        P^{(1)}_{o_1, \dots, o_{n - 1}}(o_n)
                    )^{\alpha - 1}
                } \\
            & \le
            \sum_{o_1, o_2, \dots, o_n}
                \frac{
                    (
                        P^{(0)}(o_1) 
                        P^{(0)}_{o_1}(o_2)
                        \dots
                        P^{(0)}_{o_1, \cdots, o_{n - 1}}(o_n)
                    )^{\alpha}
                }{
                    (
                        e^{\teps(o_1)}
                        P^{(1)}(o_1) 
                        e^{\teps_{o_1}(o_2)}
                        P^{(1)}_{o_1}(o_2)
                        \cdots
                        e^{\teps_{o_1, \dots, o_{n - 1}}(o_n)}
                        P^{(1)}_{o_1, \dots, o_{n - 1}}(o_n)
                    )^{\alpha - 1}
                } \\
            & =
            \sum_{o_1, \dots, o_n} L(o_1, \dots, o_n) \\
            & =
            \sum_{o_1, \dots, o_{n - 1}}
                L(o_1, \dots, o_{n - 1})
                \left(
                    \sum_{o_n} 
                    \frac{
                        (
                            P^{(0)}_{o_1, \dots, o_{n - 1}}(o_n)
                        )^{\alpha}
                    }{
                        (
                            e^{\teps_{o_1, \dots, o_{n - 1}}(o_n)}P^{(1)}_{o_1, \dots, o_{n - 1}}(o_n)
                        )^{\alpha - 1}
                    }
                \right)
                \\
            & \le
            \sum_{o_1, \dots, o_{n - 1}}
                L(o_1, \dots, o_{n - 1}) \\
            & \le
            \sum_{o_1, \dots, o_{n - 2}}
                L(o_1, \dots, o_{n - 2}) 
                \left(
                    \sum_{o_{n - 1}} 
                    \frac{
                        (
                            P^{(0)}_{o_1, \dots, o_{n - 2}}(o_{n - 1})
                        )^{\alpha}
                    }{
                        (
                            e^{\teps_{o_1, \dots, o_{n - 2}}(o_{n - 1})}P^{(1)}_{o_1, \dots, o_{n - 2}}(o_{n - 1})
                        )^{\alpha - 1}
                    }
                \right) \\
            &\vdots \\
            & \le \sum_{o_1} L(o_1) \le 1, \mbox{which implies that 
        $\renyidiv{\alpha}{
            \IT{0}{\eps}{\alpha}{\adversary}
        }{
            \IT{1}{\eps}{\alpha}{\adversary}
        } \le \eps$.
        } 
        \qedhere
        \end{align*}
    \end{proof}

    \section{Experiments}
    \label{sec:exp}

    We present two sets of experiments: In the first, we evaluate the performance of our algorithm on analytical tasks and in the second, we focus on the performance on a machine learning problem. 

    \subsection{Analytical Problem}

    Informally the problem is as follows~\citep{RSWR23}: given a message board, the goal is to estimate the number of unique users per thread, each with relative error 10\%; however we want as many estimates as possible.
    Here, a user could contribute to any of the threads.
    We consider two datasets.
    \begin{description}[leftmargin=4mm]
        \item[Synthetic:] The synthetic datasets are generated as follows: $N \in \{8000, 16000, 32000, 64000, 128000\}$
        samples are obtained from the power-law distribution with support on $[300]$ 
        (i.e., the distribution such that for $x \in [300]$,
        the density is proportional to $x^{0.75}$ and is $0$ otherwise). We assume that each $x$ corresponds
        to a thread and the number of samples with this value is the number of users.  Hence, we convert these samples 
        into a histogram of $300$ values with their counts.
        \item[Reddit:] We use the {\tt webis/tldr-17} dataset~\citep{volske-etal-2017-tl} that contains authors of posts and subreddits
            where the post was posted. The histogram consists of subreddits (i.e., threads) and the number of unique users who posted in the subreddit.
    \end{description}

    We consider two types of algorithms: one where a pure-DP guarantee is available and another
    where we eventually have an approximate-DP guarantee. 
    However, in both cases, we can check whether the current estimate $\hat{y}$ is good 
    (i.e., we expect it to be with less than 10\% error) by checking that 
    $|(\hat{y} + \sigma) / (\hat{y} - \sigma)| \in [0.9, 1.1]$ and $|\hat{y}| \ge \sigma$, where $\sigma$
    is the standard deviation of the noise used to obtain the estimate.

    In the case of pure-DP, we follow~\citep{RSWR23} and allow mechanisms to compute each estimate with privacy budget 
    $\eps = 0.001 \cdot (\sqrt{2})^i$ for some $i$, with a total available budget of $10$.
    The comparison includes the doubling mechanism with Laplace noise 
    (the algorithm that attempts one $\eps$ after another and pays for them via
    composition)~\citep{WRLWN19}, 
    noise reduction method with Laplace noise from~\citep{WRLWN19}, and \Cref{alg:maximum-selection} 
    with Laplace mechanism and $\eps' = 0.001$.
    The detailed results can be seen in 
    \Cref{table:synthetic-pure-dp}.
    Note that in terms of number of produced answers, our algorithm outperforms all other solutions
    and in terms of precision (percentage of outputs that were indeed with 10\% relative error) is similar 
    to the doubling estimator and within reasonable bounds.
    
    In the case of approximate-DP, we allow mechanisms to compute each estimate with privacy budget 
    $\eps = 0.001 \cdot (\sqrt{2})^i$ for some $i$, with a total available budget of $(10, 10^{-6})$.
    We compare the doubling mechanism with Gaussian noise and zCDP budgeting~\citep{BunS16}, the
    Brownian Motion algorithm with zCDP budgeting~\citep{WRWR22}, and
    \Cref{alg:maximum-selection} with Gaussian mechanism and RDP budgeting.  The results can be seen in 
    \Cref{table:synthetic-approx-dp}.
    In this case, our algorithm underperforms, which is not too surprising since the Gaussian mechanism with 
    zCDP budgeting is tailored for tasks of this nature.
    \pritish{\Cref{table:synthetic-pure-dp}, \Cref{table:synthetic-approx-dp} are not clear to me.
    If ``Provided Answers'' is $14.77$, does it mean we could only estimate that many threads out of $8000$ threads?}
    \sasha{No, the number of threads is $300$, but the total value is $8000$. 
    So $14.77$ means that we estimated this many threads out of $300$}

    \begin{table}[H]
        \centering
        \caption{
            Comparison between pure-DP mechanisms; 
            the column `Produced Answers'
            contains the average and standard deviation of the number of threads that
            the algorithm was able to estimate before the budget got exhausted and 
            the column `Precision' contains the average and standard deviation
            of the fraction of threads that were estimated with less than 10\% relative error
            among the estimated columns.
            A cell value \ans{$a$}{$b$} means $a$ is the average and $b$ is the standard deviation.
            The dataset name S$N$ means synthetic dataset made of $N$ samples.
        }
        \begin{tabular}{rcccccc}
    \toprule
    \multirow{4}{4em}{Dataset} & \multicolumn{2}{c}{Doubling} &
    \multicolumn{2}{c}{Noise Reduction} & \multicolumn{2}{c}{\Cref{alg:maximum-selection}} \\
    & \multicolumn{2}{c}{Mechanism} & \multicolumn{2}{c}{Mechanism}
    & \multicolumn{2}{c}{w/ Laplace} \\[5pt]
    & \multirow{2}{4em}{Precision} & Produced & \multirow{2}{4em}{Precision}  & Produced & \multirow{2}{4em}{Precision}  & Produced \\ 
    & & Answers & & Answers & & Answers \\
    \midrule
     S8000  & \ans{0.911}{0.07} & \ans{14.77}{0.47}
     & \ans{0.999}{0.02} & \ans{2.22}{1.45} 
     & \ans{0.912}{0.06} & \ans{20.37}{0.52}
     \\
     S16000 & \ans{0.912}{0.06} & \ans{22.47}{0.54} 
     & \ans{0.999}{0.02} & \ans{4.47}{2.22} 
     & \ans{0.911}{0.05} & \ans{30.63}{0.57}
     \\
     S32000 & \ans{0.910}{0.05} & \ans{33.96}{0.56} 
     & \ans{0.998}{0.12}  & \ans{8.12}{3.29}
     & \ans{0.905}{0.04} & \ans{45.74}{0.63}
     \\
     S64000  & \ans{0.909}{0.04} & \ans{50.90}{0.61} 
     & \ans{0.998}{1.72} & \ans{15.14}{5.25}
     & \ans{0.911}{0.03} & \ans{68.39}{0.73} 
     \\
     S128000 & \ans{0.909}{0.03} & \ans{76.09}{0.74} 
     & \ans{0.997}{0.01} & \ans{27.68}{9.15}
     & \ans{0.912}{0.03} & \ans{102.1}{0.88} \\
     Reddit & \ans{0.911}{0.02} & \ans{279.7}{1.10} 
     & \ans{0.992}{0.01} & \ans{207.2}{42.1}
     & \ans{0.922}{0.01} & \ans{327.5}{13.9} \\
    \bottomrule
\end{tabular}
        \label{table:synthetic-pure-dp}
    \end{table}

    \begin{table}
    \centering
    \caption{Comparison between approximate-DP mechanisms on synthetic data; 
        the column `Produced Answers'
        contains the average and standard deviation of the number of threads that
        the algorithm was able to estimate before the budget got exhausted and 
        the column `Precision' contains the average and standard deviation
        of the fraction of threads that were estimated with less than 10\% relative error
        among the estimated columns.
        A cell value \ans{$a$}{$b$} means $a$ is the average and $b$ is the standard deviation.
        The dataset name S$N$ means synthetic dataset made of $N$ samples.
    }
    \begin{tabular}{rcccccc}
        \toprule
        \multirow{4}{4em}{Dataset} & \multicolumn{2}{c}{Brownian Motion} &
        \multicolumn{2}{c}{Doubling} & \multicolumn{2}{c}{\Cref{alg:maximum-selection}} \\
        & \multicolumn{2}{c}{Mechanism} & \multicolumn{2}{c}{Mechanism}
        & \multicolumn{2}{c}{w/ Gaussian} \\[5pt]
        & \multirow{2}{3em}{Precision} & Produced & \multirow{2}{3em}{Precision}  & Produced & \multirow{2}{3em}{Precision}  & Produced \\ 
        & & Answers & & Answers & & Answers \\
        \midrule
         S8000  & \ans{0.974}{0.03} & \ans{28.63}{0.74}
         & \ans{0.969}{0.05} & \ans{17.20}{0.57} 
         & \ans{0.970}{0.09} & \ans{6.694}{0.46}
         \\
         S16000 & \ans{0.973}{0.02} & \ans{50.44}{0.89} 
         & \ans{0.971}{0.03} & \ans{30.75}{0.65} 
         & \ans{0.972}{0.05} & \ans{11.97}{0.33}
         \\
         S32000 & \ans{0.974}{0.02} & \ans{88.58}{1.01} 
         & \ans{0.973}{0.02} & \ans{54.74}{0.80}
         & \ans{0.971}{0.04} & \ans{20.61}{0.51}
         \\
         S64000  & \ans{0.975}{0.01} & \ans{154.8}{1.29} 
         & \ans{0.974}{0.02} & \ans{96.95}{1.01}
         & \ans{0.973}{0.03} & \ans{33.84}{0.63} 
         \\
         S128000 & \ans{0.977}{0.01} & \ans{269.7}{1.50} 
         & \ans{0.980}{0.01} & \ans{173.6}{1.23}
         & \ans{0.970}{0.03} & \ans{52.33}{1.17} \\
        \bottomrule
    \end{tabular}
    \label{table:synthetic-approx-dp}
\end{table}

    \subsection{Machine Learning}
    \label{section:ml}
    We perform a second set of experiments related to a machine learning task. 
    \begin{enumerate}[leftmargin=*]
        \item In the first set of experiments we follow the setup from~\citep{WRLWN19,WRWR22} 
            and train a linear regression model on a dataset of timeseries generated by Twitter usage~\citep{twitter-data}
            (subsampled to 100000 data-points) and search for a model with at most $0.05$ MSE.
            We compare the following two mechanisms. 
            \begin{enumerate}
                \item Brownian motion with the AboveThreshold mechanism using sufficient statistics perturbation~\citep{VS09},
                    a sequence $0.1$, $0.2$, \dots $1$ of values of $\eps$ for Brownian motion, and $0.01$ for
                    AboveThreshold on the MSE of the model.
                \item \Cref{alg:maximum-selection} with the DP-SGD~\citep{AbadiCGMMT016} mechanism, learning linear models 
                    with $\eps' = 0.01$, possible values of $\eps$ in $\{0.1, 0.2, \dots 1\}$, learning rate in $\{0.01, 0.1, 1\}$,
                    epochs in $\{1, 5, 10\}$, batch sizes in $\{32, 64, 128, 256, 512, 1000\}$, and clipping
                    norms in $\{0.1, 1, 10\}$.
                \item Doubling mechanism~\cite{WRLWN19} running DP-SGD tuned according to~\cite{PT22} 
                    with identical hyperparameters to those used by \Cref{alg:maximum-selection}.
            \end{enumerate}
        \item In the second experiment, we train a classifier for the MNIST dataset~\citep{lecun2010mnist} 
            and search for the minimal $\eps$ such that the model has at least $0.6$ accuracy.
            We compare the following two mechanisms. 
            \begin{enumerate}
                \item Brownian motion with the AboveThreshold mechanism using output perturbation~\citep{VS09},
                    a sequence $0.1$, $0.2$, \dots $1$ of values of $\eps$ for Brownian motion, and $0.01$ for
                    AboveThreshold on accuracy of the model.
                \item \Cref{alg:maximum-selection} with DP-SGD mechanisms learning CNN models 
                    (for the architecture see \cite{mnist-example})
                    with $\eps' = 0.01$, possible values of $\eps$ in $\{0.1, 0.2, \dots 1\}$, learning rate in $\{0.01, 0.1, 1\}$,
                    epochs in $\{1, 5, 10\}$, batch sizes in $\{32, 64, 128, 256, 512, 1000\}$, and clipping
                    norms in $\{0.1, 1, 10\}$.
                \item Doubling mechanism~\cite{WRLWN19} running DP-SGD tuned according to~\cite{PT22} 
                    with identical hyperparameters to those used by \Cref{alg:maximum-selection}.
            \end{enumerate}
        \item In the third experiment, we train a classifier for the Gisette~\citep{gisette_170} dataset 
            and search for the minimal $\eps$ such that the model has at least $0.4$ accuracy.
            We compare the following two mechanisms. 
            \begin{enumerate}
                \item Brownian motion with the AboveThreshold mechanism using output perturbation~\citep{VS09},
                    a sequence $0.1$, $0.2$, \dots $1$ of values of $\eps$ for Brownian motion, and $0.01$ for
                    AboveThreshold on accuracy of the model.
                \item \Cref{alg:maximum-selection} with DP-SGD mechanisms learning a linear model
                    with $\eps' = 0.01$, possible values of $\eps$ in $\{0.1, 0.2, \dots 1\}$, learning rate in $\{0.01, 0.1, 1\}$,
                    epochs in $\{1, 5, 10\}$, batch sizes in $\{32, 64, 128, 256, 512, 1000\}$, and clipping
                    norms in $\{0.1, 1, 10\}$.
                \item Doubling mechanism~\cite{WRLWN19} running DP-SGD tuned according to~\cite{PT22} 
                    with identical hyperparameters to those used by \Cref{alg:maximum-selection}.
            \end{enumerate}
    \end{enumerate}
    (In both cases we use Opacus \citep{opacus} for training DP-SGD.)
    
    The results of comparison can be seen in \Cref{table:ml}. 
    Our algorithm significantly outperforms the previous Brownian motion algorithms and doubling mechanism. 
    This can be explained by the fact that DP-SGD vastly outperforms the simpler models in these settings~\citep{YZ00L20} 
    and the fact that doubling requires running tuning which (in order to keep the budget small) needs high $\alpha$.
    Our algorithm also consistently outperforms the doubling mechanism. 
    This superior performance can be attributed to the doubling mechanism's privacy
    loss being approximately two times greater than that of the tuning mechanism 
    which in-turn is about two times greater than the underlying procedure.

    \begin{table}[t]
        \centering
        \caption{
            Comparison of the $\eps$ values used by the Brownian Motion mechanism,
            doubling mechanism, and
            \Cref{alg:maximum-selection} when applied to
            machine learning tasks.
            The numbers represent the average ex-post $(\eps, 10^{-6})$-DP
            guarantees over 100 trials.
        }
        \begin{tabular}{lccc}
            \toprule
            Dataset & Brownian Motion & Doubling Mechanism & \Cref{alg:maximum-selection} \\
            \midrule
            Twitter & 0.77 & 0.55 & 0.28 \\
            MNIST   & 0.62 & 0.38 & 0.32 \\
            Gisette & 0.33 & 0.54 & 0.23 \\
            \bottomrule
        \end{tabular}
        \label{table:ml}
    \end{table}

    \section{Conclusion and Open Problems}

    In this work, we give a simple yet general algorithm for DP hyperparameter tuning that works even for ex-post DP and RDP.  Despite its generality, our experiments show that it achieves significant advantage over previous algorithms for ML applications.
    Two immediate questions remain. 
    First, is it possible to get rid of the $\ell$'s in \Cref{theorem:max-score-rdp}? 
    Second, and somewhat related, is the question of proving a zCDP version of the result, 
    which would improve the analysis in the case of analytics workloads since the Gaussian mechanism is typically used in those cases.

    \newpage
    
    \bibliographystyle{abbrvnat}
    \bibliography{ref}
    
\end{document}